\documentclass[aps,twocolumn,prl,superscriptaddress,amsmath,amssymb,amsfonts,floatfix]{revtex4-2}

\usepackage{amsmath,amsthm}
\usepackage{amssymb}
\usepackage{amsfonts}

\usepackage[dvips]{graphicx}
\usepackage{epsfig}
\usepackage{tabularx}
\usepackage{color}
\usepackage[colorlinks=true,linkcolor=blue,citecolor=magenta,urlcolor=blue]{hyperref}

\usepackage[mathscr]{eucal}
\usepackage{braket}
\usepackage{physics}
\DeclareMathOperator{\Gap}{Gap}
\DeclareMathOperator{\Halting}{Halting}
\usepackage{tikz,pgfplots}
\pgfplotsset{compat=1.18}
\usetikzlibrary{arrows.meta,calc,positioning,decorations.pathmorphing,decorations.pathreplacing,patterns,3d}

\newcommand{\AdS}{\mathrm{AdS}}
\newcommand{\loc}{\mathrm{loc}}
\newcommand{\FK}{\mathrm{FK}}
\newcommand{\clock}{\mathrm{clock}}
\newcommand{\prop}{\mathrm{prop}}
\newcommand{\init}{\mathrm{init}}
\newcommand{\halt}{\mathrm{halt}}
\newcommand{\Tile}{\mathrm{tile}}
\newcommand{\ampA}{\mathrm{amp}}
\DeclareRobustCommand{\adj}{\ensuremath{\mathrm{adj}}}
\DeclareRobustCommand{\cO}{\ensuremath{\mathcal{O}}}
\DeclareRobustCommand{\SU}{\ensuremath{\mathrm{SU}}}
\newtheorem{theoremS}{Theorem}

\newtheorem{theoremA}{Theorem}
\newtheorem{lemmaA}{Lemma}
\RequirePackage[normalem]{ulem} 
\RequirePackage{color}\definecolor{RED}{rgb}{1,0,0}\definecolor{BLUE}{rgb}{0,0,1} 
\providecommand{\DIFaddbegin}{} 
\providecommand{\DIFaddend}{} 
\providecommand{\DIFdelbegin}{} 
\providecommand{\DIFdelend}{} 
\providecommand{\DIFaddbeginFL}{} 
\providecommand{\DIFaddendFL}{} 
\providecommand{\DIFdelbeginFL}{} 
\providecommand{\DIFdelendFL}{} 
\newcommand{\DIFscaledelfig}{0.5}
\RequirePackage{settobox} 
\RequirePackage{letltxmacro} 
\newsavebox{\DIFdelgraphicsbox} 
\newlength{\DIFdelgraphicswidth} 
\newlength{\DIFdelgraphicsheight} 
\LetLtxMacro{\DIFOincludegraphics}{\includegraphics} 
\newcommand{\DIFaddincludegraphics}[2][]{{\color{blue}\fbox{\DIFOincludegraphics[#1]{#2}}}} 
\newcommand{\DIFdelincludegraphics}[2][]{
\sbox{\DIFdelgraphicsbox}{\DIFOincludegraphics[#1]{#2}}
\settoboxwidth{\DIFdelgraphicswidth}{\DIFdelgraphicsbox} 
\settoboxtotalheight{\DIFdelgraphicsheight}{\DIFdelgraphicsbox} 
\scalebox{\DIFscaledelfig}{
\parbox[b]{\DIFdelgraphicswidth}{\usebox{\DIFdelgraphicsbox}\\[-\baselineskip] \rule{\DIFdelgraphicswidth}{0em}}\llap{\resizebox{\DIFdelgraphicswidth}{\DIFdelgraphicsheight}{
\setlength{\unitlength}{\DIFdelgraphicswidth}
\begin{picture}(1,1)
\thicklines\linethickness{2pt} 
{\color[rgb]{1,0,0}\put(0,0){\framebox(1,1){}}}
{\color[rgb]{1,0,0}\put(0,0){\line( 1,1){1}}}
{\color[rgb]{1,0,0}\put(0,1){\line(1,-1){1}}}
\end{picture}
}\hspace*{3pt}}} 
} 
\LetLtxMacro{\DIFOaddbegin}{\DIFaddbegin} 
\LetLtxMacro{\DIFOaddend}{\DIFaddend} 
\LetLtxMacro{\DIFOdelbegin}{\DIFdelbegin} 
\LetLtxMacro{\DIFOdelend}{\DIFdelend} 
\DeclareRobustCommand{\DIFaddbegin}{\DIFOaddbegin \let\includegraphics\DIFaddincludegraphics} 
\DeclareRobustCommand{\DIFaddend}{\DIFOaddend \let\includegraphics\DIFOincludegraphics} 
\DeclareRobustCommand{\DIFdelbegin}{\DIFOdelbegin \let\includegraphics\DIFdelincludegraphics} 
\DeclareRobustCommand{\DIFdelend}{\DIFOaddend \let\includegraphics\DIFOincludegraphics} 
\LetLtxMacro{\DIFOaddbeginFL}{\DIFaddbeginFL} 
\LetLtxMacro{\DIFOaddendFL}{\DIFaddendFL} 
\LetLtxMacro{\DIFOdelbeginFL}{\DIFdelbeginFL} 
\LetLtxMacro{\DIFOdelendFL}{\DIFdelendFL} 
\DeclareRobustCommand{\DIFaddbeginFL}{\DIFOaddbeginFL \let\includegraphics\DIFaddincludegraphics} 
\DeclareRobustCommand{\DIFaddendFL}{\DIFOaddendFL \let\includegraphics\DIFOincludegraphics} 
\DeclareRobustCommand{\DIFdelbeginFL}{\DIFOdelbeginFL \let\includegraphics\DIFdelincludegraphics} 
\DeclareRobustCommand{\DIFdelendFL}{\DIFOaddendFL \let\includegraphics\DIFOincludegraphics} 

\begin{document}

\title{Undecidability in Spacetime Geometry via the AdS/CFT Correspondence} 
\DIFdelbegin 

\DIFdelend \author{Sameer Ahmad Mir}
\DIFdelbegin 
\DIFdelend \DIFaddbegin \email{sameerphst@gmail.com}
\affiliation{Department of Physics, Jamia Millia Islamia, New Delhi, 110025, India.}
\affiliation{Department of Computer Sciences, Asian School of Business, Noida, Uttar Pradesh, 201303, India.}
\DIFaddend 

 \author{Francesco Marino}
 \DIFaddbegin \email{francesco.marino@ino.cnr.it}
	\DIFaddend \affiliation{CNR-Istituto Nazionale di Ottica, Via Sansone 1, I-50019 Sesto Fiorentino (FI), Italy.}
	\affiliation{INFN, Sezione di Firenze, Via Sansone 1, I-50019 Sesto Fiorentino (FI), Italy}

\author{Arshid Shabir}
\DIFaddbegin \email{aslone186@gmail.com}
\DIFaddend \affiliation{Canadian Quantum Research Center, 204-3002 32 Ave, Vernon, BC V1T 2L7, Canada.}

\author{Lawrence M. Krauss}
\DIFaddbegin \email{lawrence@originsproject.org}

\DIFaddend \affiliation{The Origins Project Foundation, Phoenix, AZ 85028}

\author{Mir Faizal}
\DIFaddbegin \email{mirfaizalmir@googlemail.com}
\DIFaddend \affiliation{Canadian Quantum Research Center, 204-3002 32 Ave, Vernon, BC V1T 2L7, Canada.}
\affiliation{Irving K. Barber School of Arts and Sciences, 
  University of British Columbia Okanagan, Kelowna,
British Columbia V1V 1V7, Canada.}
\affiliation{Department of Mathematical Sciences, Durham University, Upper Mountjoy, Stockton Road, Durham DH1 3LE, UK.}
\affiliation{Faculty of Sciences, Hasselt University, Agoralaan Gebouw D, Diepenbeek, 3590, Belgium.}

\DIFdelbegin 
\DIFdelend 

\begin{abstract}
Undecidability, a hallmark of G\"odel incompleteness theorems, has recently emerged in quantum many-body physics through the spectral gap problem. We demonstrate  how this logical limitation can be holographically transmitted to a class of gravitational theories via the  AdS/CFT correspondence. By embedding a translationally invariant spin Hamiltonian with undecidable gap status into a large-N gauge theory, we generate an AdS dual in which the selection of dominant bulk saddle (Poincar\'e AdS or AdS soliton) is itself undecidable. Consequently, under standard semiclassical holographic assumptions, even determining which smooth spacetime geometry emerges from quantum gravity can be beyond the limits of computability.
\end{abstract}

\maketitle

\emph{Introduction.} The AdS/CFT correspondence provides a concrete realization of the holographic principle \cite{tHooft:1993dmi,Susskind:1994vu}, relating a quantum gravitational theory (string theory or M-theory) in AdS spacetime to a CFT on its boundary \cite{Maldacena:1997re,Gubser:1998bc,Witten:1998qj}. A powerful feature of this correspondence is its strong-weak nature: when the boundary CFT is strongly coupled, the bulk gravity theory is weakly coupled and perturbatively tractable. This makes it possible to compute correlation functions \cite{Gubser:1998bc} and transport properties in strongly interacting systems.    Beyond these applications, the correspondence represents a major advance, providing a non-perturbative formulation of string theory with specific boundary conditions. In principle AdS/CFT allows one to extract all bulk physics from the boundary CFT \cite{Aharony:1999ti,DHoker:2002nbb,Nastase:2007kj,Gubser:2009md,Hartnoll:2009sz,McGreevy:2009xe,Penedones:2016voo}, provided that a complete mapping between the observables of the two theories is established \cite{Kajuri:2020vxf}.

An important problem when exploiting this correspondence is to identify which bulk saddle dominates the gravitational path integral for a given boundary CFT or, equivalently, which semiclassical spacetime emerges, since multiple bulk saddles may be consistent with the same boundary data.
For instance, both thermal AdS and AdS black holes satisfy the same boundary conditions \cite{Hawking:1982dh}, with the dominant saddle determining whether the boundary CFT is in a confined or deconfined phase \cite{Witten:1998zw}. 
Similarly, an AdS soliton can energetically dominate Poincaré AdS under certain conditions \cite{Horowitz:1998ha}, indicating a different preferred vacuum state. Saddle-point selection underlies bulk reconstruction \cite{Hamilton:2006az,Almheiri:2014lwa}, sets the semiclassical background for Hawking radiation and the black hole information problem \cite{Maldacena:2001kr}, and influences locality in emergent spacetime \cite{Kabat:2011rz}.

The selection of the dominant bulk geometry is therefore central to holography and to the ability to use an underlying quantum gravitational theory to derive semiclassical results. Here, we show that there exists a class of AdS/CFT scenarios in which this selection is undecidable.

Our starting point is the work by Cubitt-Pérez‑García-Wolf (CPW) \cite{Cubitt:2015xsa}, who demonstrated that given a quantum many-body Hamiltonian it is, in general, computationally impossible to determine whether the system it describes is gapped or gapless. The meaning of this statement is twofold. First, no algorithm can decide in finite time whether such a system is gapped or gapless, in the same sense that the halting problem is algorithmically undecidable \cite{Turing:1937qvq}. Second, for any consistent axiomatization of mathematics, there exist specific Hamiltonians for which the presence or absence of a spectral gap cannot be decided within those axioms. This is the form of undecidability established by G\"odel's incompleteness theorems \cite{Godel:1931}.

We demonstrate here that the problem of the selection of dominant bulk geometry (Poincar\'e AdS vs AdS soliton) is dual to the spectral-gap problem of the boundary quantum many-body system. We map the local, translationally invariant Hamiltonian of Ref. \cite{Cubitt:2015xsa} into a Euclidean path-integral formulation using a Trotter decomposition, and decouple its interactions by introducing an auxiliary scalar field through a Hubbard-Stratonovich transformation. The scalar field encodes the halting bit of the Turing machine through a coupling that persists in the continuum limit. We then recast this scalar field theory as an adjoint matrix model with a non-Abelian gauge structure. This reformulation allows us to properly define the large-N limit, in which the matrix theory admits a holographic dual described by gravity in asymptotically AdS spacetime with a bulk scalar mode sourced by the boundary scalar operator. 

The key observation is that the sign of the scalar effective mass term, fixed by the same undecidable halting predicate as in the boundary spectral-gap problem, determines which smooth geometry dominates: Poincar\'e AdS (gapless phase) or the AdS soliton (gapped phase).  

\emph{Preliminaries.} We briefly recall the CPW construction \cite{Cubitt:2015xsa}, which established the undecidability of the spectral-gap problem (see also Sec. I of the Supplemental Material \cite{SM}). Consider translation-invariant, spin Hamiltonians $H_L$ with nearest-neighbor, finite-range interactions on a 2D lattice of size $L \times L$, with each site carrying a local Hilbert space dimension $d$. The spectral gap is the difference between the first excited state and the ground state, $\gamma_L=E_1(L)-E_0(L)$. A uniform spectral gap, i.e. the existence of a lower bound $\Delta>0$ such that $\inf_L\gamma_L\ge\Delta$, implies exponential decay of correlations (clustering) and stability of the quantum phase \cite{Hastings:2005pr}. On the other hand, if $\liminf_{L\to\infty}\gamma_L=0$, the system is gapless and supports arbitrarily low-energy excitations \cite{Nachtergaele:2010}.

Using the Feynman-Kitaev history state \cite{Feynman:2023yvp,A_Yu_Kitaev_1997,Kitaev2002,Kempe:2004sak}, a universal Turing machine on input $u$ is mapped to a frustration free $k$-local Hamiltonian. If the computation halts after runtime $T$, the history state carries weight $1/(T+1)$ on the halting configuration. In this case an  additional output term, $H_{\mathrm{halt}}=\delta(T)\Pi_{\mathrm{halt}}$ then shifts the ground state energy by $\delta(T)/(T + 1)$, and by zero otherwise. 

The CPW construction embeds this computational layer into the above lattice and couples it to a spectral scaling mechanism so that, if halting occurs, the effect produces a finite shift in the energy density as $L\to\infty$. This defines a computable map $u\mapsto H(u)$ that assigns to each input $u$ a set of nearest-neighbor interactions, namely a translation-invariant nearest-neighbor Hamiltonian on $\mathbb Z^2$ (here $\mathbb{Z}^2$ denotes the two-dimensional integer lattice), with the one of two alternatives: if the machine halts, there exists $\Delta(u)>0$ with $\gamma_L(u)\ge\Delta(u)$ $\forall L$ (gapped phase); if it doesn't, $\liminf_{L\to\infty}\gamma_L(u)=0$ (gapless phase). Any algorithm that would allow one to decide, from the finite description of $h(u)$, which case holds, would solve the halting problem. Because no such algorithm therefore exists, the spectral-gap promise problem is Turing-undecidable. 
This obstruction persists even in one-dimensional quantum spin chains \cite{Cubitt:2015xsa}.

G\"odel incompleteness and Turing's halting problem are two faces of the same coin: a consistent, recursively axiomatizable theory cannot be complete (or it would decide halting), and a halting decider would make such a theory complete \cite{BoolosBurgessJeffrey2007,Smullyan1992}. The obstruction for quantum chains is therefore G\"odelian: the Turing-undecidability of the spectral-gap predicate means that, for certain Hamiltonians, the statement of whether they are gapped or gapless is formally independent of any consistent computably axiomatizable theory, i.e. it can be neither proved nor refuted within that theory.

\emph{Euclidean path-integral representation.} We denote by $\vartheta(u)$ the CPW halting predicate, equal to $1$ if the machine halts on input $u$ and $0$ otherwise. As we outline here, in the continuum, $\vartheta(u)$ survives as a relevant scalar-mass operator, making the infrared (IR) choice between gapped and gapless phases algorithmically undecidable. We refer the reader to Supplemental Material \cite{SM} Sec. II for details. 

Starting from the partition function
\(
Z_{L}(\beta,u)\;=\;\Tr_{\mathcal H_{L}}\!\bigl[e^{-\beta H_{L}(u)}\bigr],
\)
a {Suzuki-Trotter decomposition \mbox{
\cite{Trotter1959,Suzuki1976,PhysRevA.28.3575} {can map the system } }\hskip0pt
} to a $(2{+}1)$D classical statistical model, accurate up to $\mathcal O(a_{\tau}^{2})$ with $a_{\tau}=\beta/N_\tau$. Here, $N_\tau$ denotes the number of discrete time slices in the Euclidean-time direction, so that the continuum limit corresponds to $N_\tau \to \infty$. 
The Lieb-Robinson bound \cite{Lieb:1972wy} ensures a finite signal velocity (compatible with relativity) but does not by itself fix the dynamical exponent $z$ that sets the anisotropic scaling between space and time. We restrict to microscopic models that flow to a Lorentz-invariant ($z=1$) fixed point (e.g., the O($N$) Wilson-Fisher class), so that our HS continuum description applies.

A Hubbard-Stratonovich (HS) decoupling yields, after integrating out spins, an $O(N_v)$ scalar theory (with $N_v$ the number of spin components) with bare Euclidean Lagrangian 
\begin{multline}
\mathcal L_E \;=\; \frac{1}{2}(\partial_\mu\varphi^A)(\partial^\mu\varphi^A)
+ \frac{m^2}{2}\,\varphi^A\varphi^A
+ \frac{\lambda}{4!}\,(\varphi^A\varphi^A)^2\\
\;+\; g_{\rm halt}\,\varphi^A\varphi^A,
\label{eq:new_L}
\end{multline}
where $g_{\rm halt}\equiv \vartheta(u)\,\mu_h^2 $ includes the halting bit, $\mu_h$ is a scale factor with the dimensions of a mass and $A=1,\dots,N_v$.
A {one-loop renormalization group }{{analysis }}  shows that $g_{\rm halt}$ is a relevant deformation (its coupling grows under coarse-graining). The effective mass of the scalar field is $
m_{\rm eff}^2(u)=m^2+g_{\rm halt}+\Sigma_{\rm loop}(\lambda)$.
The $\Sigma_{\rm loop}(\lambda)$ encodes loop corrections to the scalar mass. Renormalizing the system by enforcing $m^2+\Sigma_{\rm loop}(\lambda)=0$, gives $m_{\rm eff}^2(u)=\vartheta(u)\,\mu_h^2$ to leading order. In this case, the effective mass is controlled entirely by the halting deformation: if $\vartheta(u)=0$ then $m_{\rm eff}^2=0$, placing the IR theory at the Wilson-Fisher critical point (gapless, with long-range/power-law correlations)~\cite{Wilson:1973jj,Bagnuls:2000ae}. If $\vartheta(u)=1$ then $m_{\rm eff}^2>0$, yielding a uniformly gapped phase with correlation length $\xi=\Delta^{-1}$ and exponential clustering. Since $m_{\rm eff}^2(u)$ depends on the uncomputable predicate $\vartheta(u)$, the distinction between gapless and gapped outcomes is algorithmically undecidable.

\emph{Matrix theory.} We promote the $O(N_v)$ vector theory to an \SU$(N_c)$ matrix theory by choosing {$N_v = N_c^2 -1$ } and repackaging the vectors as an adjoint matrix $M(x)=\sum_{A=1}^{N_v}\phi_A(x)T^A$, where {$T^A \in {\mathfrak{su}(N_c)}$ } (with $\Tr[T^A T^B]=\tfrac{1}{2}\delta^{AB}$){. Gauging }  this global symmetry {to $SU(N_c)$ } introduces a non-Abelian gauge field $A_\mu$ with covariant derivative {$D_\mu M=\partial_\mu M -i[A_\mu,M]$ } and field strength {$F_{\mu\nu}=\partial_\mu A_\nu-\partial_\nu A_\mu -i[A_\mu,A_\nu]$ } (see Supplemental Material \cite{SM}, Sec. III for details). 
The adjoint action is
\begin{multline}
  S_{\mathrm{adj}}
  = N_{c}\!\int d^{3}x\,\Tr\!\Bigl[(D_{\mu}M)^{2}+m^{2}M^{2}+\lambda M^{4}
      +\tfrac{1}{4g^{2}}F_{\mu\nu}^{2}\Bigr]\\
  + N_{c}\!\int d^{3}x\,g_{\mathrm{halt}}\Tr(M^{2}),
  \label{eq:Sadj}
\end{multline}
with fixed ’t Hooft coupling $\lambda =g^{2}  N_{c}$ \cite{tHooft:1973alw}. 

This choice ensures that the planar (large-$N_c$) limit of the matrix model reproduces the large-$N_v$ behavior of the HS vector theory introduced above.
At large $N_c$, vacuum diagrams are organized by genus $h$, with contributions scaling as $N_c^{2-2h}$. As a consequence, connected correlators factorize at leading order
 $
  \bigl\langle \Tr M^{k}\,\Tr M^{\ell}\bigr\rangle
  = \bigl\langle \Tr M^{k}\bigr\rangle\bigl\langle \Tr M^{\ell}\bigr\rangle
    + \mathcal O(N_{c}^{-2}),
  \label{eq:factorization}
 $
so that a single planar saddle controls the dynamics \cite{Witten1979}. Standard $d{=}3$ renormalization gives a finite planar one-loop shift $\Sigma_{\rm loop}(\lambda ,\Lambda)=\mathcal O(1)$, ({where } $\Lambda$ {is the } UV cutoff) giving the classical quadratic action
\begin{equation}
  S_{\mathrm{cl}}
  = N_{c}\!\int d^{3}x\,\Tr\!\bigl[(\partial_{\mu}M)^{2}+m_{\mathrm{eff}}^{2}M^{2}\bigr] \, .
  \label{eq:meff}
\end{equation}
Here, $\Sigma_{\rm loop}$ is the finite planar one-loop self-energy, and imposing $m^2+\Sigma_{\rm loop}=0$ gives $m_{\mathrm{eff}}^{2}=g_{\mathrm{halt}}$ to leading order.

The self-energy has a genus expansion
\begin{equation}
  \Pi(k)=\sum_{h=0}^{\infty} N_{c}^{-2h}\,\Pi_{h}(k,\lambda ,\Lambda),
  \label{eq:genusPi}
\end{equation}
with planar part $\Pi_0=\mathcal O(1)$ and $\Pi_{h}=\mathcal O(N_{c}^{-2h})$ for $h\ge 1$ \cite{Coleman1985}. Therefore, non-planar terms cannot significantly affect $m_{\mathrm{eff}}^{2}$, and in the planar limit the IR phase is still determined by the undecidable predicate $\vartheta(u)$.

Equivalently, examining the problem in terms of the spectral gap, for a finite cubic box of side $L$ with periodic boundary conditions, the propagator and dispersion relation are, respectively (for details see Supplemental Material \cite{SM})
\begin{equation}
  G(p)=\frac{1}{p^{2}+m_{\mathrm{eff}}^{2}},\qquad
  E(p)=\sqrt{p^{2}+m_{\mathrm{eff}}^{2}} \, .
  \label{eq:PRL-propagator}
\end{equation}
The spectral gap for nonzero momentum modes is $\gamma_L^{(p\neq0)}(u)=\sqrt{(2\pi/L)^2+m_{\rm eff}^2(u)}$. In the thermodynamic limit, $\Delta(u)=\lim_{L\to\infty}\gamma_L(u)=m_{\rm eff}(u)$.
A positive $\Delta(u)$ implies exponentially decaying correlations with correlation length $\xi=\Delta^{-1}$, and a stable gapped phase. 
When $m_{\mathrm{eff}}^{2}(u)=0$, correlations display a power-law behavior corresponding to an IR CFT with dynamical exponent $z{=}1$. 

The halting-term is again crucial in determining which regime is realized: $\vartheta(u)=0$ yields criticality; if $\vartheta(u)=1$, then $m_{\mathrm{eff}}^{2}(u)\ge c>0$ uniformly in $L$, leading to a gapped phase (see Supplemental Material \cite{SM}). Since no Turing machine can infer the behavior of $\vartheta(u)$ from microscopic data \cite{Cubitt:2015xsa}, undecidability holds even for the large-$N_{c}$ continuum matrix theory. As we will see, the question of which phase ultimately arises corresponds holographically to the selection of the dominant bulk geometry. 

\emph{Ultraviolet completion.} 
Having shown that the halting deformation survives at large \(N_c\) and governs the IR behavior, we now justify its microscopic origin. 
By performing a real-space coarse-graining and introducing a HS field to linearize the inter-block couplings, one obtains an adjoint $SU(N_c)$ gauge-scalar completion with an emergent non-Abelian connection and Yang-Mills kinetic term \(\mathrm{Tr}\,F_{\mu\nu}F^{\mu\nu}\).
As shown in (Sec.~IV) of \cite{SM}, the Yang–Mills sector arises naturally under this construction, while the halting deformation contributes only through the scalar mass shift \(m^2\!\to m_{\mathrm{eff}}^2\). To leading order the halting deformation enters only through the mass shift and leaves the gauge kinetic term unchanged. Hence, the only relevant quantities are \(m_{\mathrm{eff}}^2\) and the large-\(N_c\) scaling, where \(m_{\mathrm{eff}}^2\) is determined by the undecidable predicate \(\vartheta(u)\).

 Confinement is probed by the Wilson loop \(W(C)=\frac{1}{N_c}\,\mathrm{Tr}\,\mathcal{P}\exp\!\bigl(i\oint_C A_\mu dx^\mu\bigr)\), whose area law encodes the string tension. At fixed 't~Hooft coupling \(\lambda  =g^2  N_c\), connected vacuum diagrams of genus \(h\) scale as \(N_c^{2-2h}\); hence nonplanar terms are \(O(N_c^{-2})\) and do not affect the leading behavior of \(m_{\mathrm{eff}}^2\) (see ribbon-graph analysis in Sec.~III of \cite{SM}). The partition function admits the large-$N_c$ expansion \(\ln Z(N_c,\lambda  )=\sum_{h\ge 0} N_c^{2-2h} F_h(\lambda  )\) which shows that the planar sector \(h=0\) dominates.
 Interpreted holographically, this is a string-loop expansion with coupling \(g_s\sim 1/N_c\), while \(\lambda  \) sets the effective string tension. Hence, the planar limit corresponds to classical bulk gravity in AdS/CFT and \(1/N_c\) terms encode string loop corrections \cite{tHooft:1973alw,Witten1979}. Crucially, any \(N_c\)-independent deformation, in particular the halting term \(g_{\rm halt}\) in \(m_{\mathrm{eff}}^2\), survives at planar order and determine the classical bulk geometry, while \(1/N_c\) effects are parametrically subleading and cannot overturn this shift.

\emph{Holographic duality and competing geometries.} In the holographic planar limit, the stress tensor of adjoint matter scales as $\langle T_{\mu\nu}\rangle \sim O(N_c^2)$,
while the AdS/CFT matching fixes the gravitational coupling through $L^{d-1}/G_{d+1}\sim N_c^{2}$. Their product  $G_4\langle T_{\mu\nu}\rangle$ is therefore $O(1)$, indicating that the halting-bit deformation produces a classical backreaction on the geometry.
Accordingly, our analysis is performed at leading classical order in supergravity, with higher-derivative and loop effects subleading.
 
We now examine how undecidability manifests in the competition between semiclassical bulk geometries. In the $N_c\!\to\!\infty$ limit, the source for the operator $\mathcal O=\Tr M^2$ in the boundary theory is implemented by GKPW prescription,
\begin{equation}
  Z_{\rm QFT}[J]=\exp\!\Big[-S_{\rm grav}^{\rm on\text{-}shell}\!\big(\Phi\!\mid_{z=0}=J\big)\Big],
  \label{eq:GKPW-PRL}
\end{equation}
where the bulk action includes the Einstein-Hilbert term, the Gibbons-Hawking-York boundary term and counterterms:
\begin{multline}
  S_{\rm grav}
  =\frac{1}{16\pi G_4}\!\int_{\mathcal M}\! d^4x\sqrt{-g}\Big(R+\frac{6}{L^2}\Big)\\
   +\frac{1}{8\pi G_4}\!\int_{\partial\mathcal M}\! d^3x\sqrt{-\gamma}\,K
   +S_{\rm ct}[\gamma] \, .
  \label{eq:Sgrav-PRL}
\end{multline}
Holographic renormalization ensures that the boundary stress tensor is traceless for a flat 3D boundary ($d{=}3$) \cite{Henningson:1998ey,Skenderis2002}, as expected for a CFT. Validity of the semiclassical approximation for the bulk geometry requires that the AdS radius is much larger $L\!\gg\!\ell_s$ ($\lambda_{\rm tH}\!\gg\!1$, $\ell_s^2/L^2\!\sim\!\lambda_{\rm tH}^{-1/2}$) to suppress $1/N_c^2$ loops {(where $\ell_s$ is the fundamental string length) } \cite{Polchinski:1998rq,Polchinski:1998rr,Klebanov:2000me}. 

The source $J$ couples to a bulk scalar $\Phi$ satisfying $(\Box_g-m_\Phi^2)\Phi=0$ with near-boundary expansion
 $ \Phi=z^{3-\Delta}\phi_{(0)}(x)+z^\Delta \phi_{(2\Delta-3)}(x)+\cdots,\,\ $  and $
  m_\Phi^2L^2=\Delta(\Delta-3),
  \label{eq:FG-PRL}
 $
 where $\phi_{(0)}\!\equiv\!J$ and $\phi_{(2\Delta-3)}\!\propto\!\langle\Tr M^2\rangle$ \cite{Freedman:1998tz}. Tuning $\phi_{(0)}{=}0$ gives Poincaré AdS$_4$,
\begin{equation}
  ds^2=\frac{L^2}{z^2}\big(dz^2+\eta_{\mu\nu}dx^\mu dx^\nu\big) \, ,
  \label{eq:PAdS-PRL}
\end{equation}
which represents the gapless critical phase. Turning on the halting deformation, $\vartheta(u)=1$ ({$\phi_{(0)}\!\neq\!0$)} drives $\Phi$ away from the critical point and can give rise, within the $(T,L_\theta)$ window discussed below, to the Euclidean AdS$_4$ soliton geometry \cite{Horowitz:1998ha}.
 \begin{multline}
  ds^2=\frac{r^2}{L^2}\!\big(d\tau^2+dx^2+f(r)d\theta^2\big)+\frac{L^2}{r^2 f(r)}dr^2,\\
  f(r)=1-\Big(\frac{r_0}{r}\Big)^3,\quad
  \theta\sim\theta+\frac{4\pi L^2}{3r_0}.
  \label{eq:soliton-PRL}
\end{multline}
Here, the boundary topology is $S^1_\tau(\beta)\times\mathbb R_x\times S^1_\theta(L_\theta)$, with the soliton caps $S^1_\theta$ (contractible) and the black brane caps $S^1_\tau$ (thermal). 

The corresponding free-energy densities per unit $x$-length are
$F_{\rm sol}=-c_s/L_\theta^{2}$ and $F_{\rm bb}=-c_T\,T^{3}L_\theta$, which coincide at the critical temperature
 \(
T_c(L_\theta)=\Big(\tfrac{c_s}{c_T}\Big)^{1/3}\,\frac{1}{L_\theta}.
 \)
A relevant boundary (multi-trace) deformation proportional to \(\vartheta(u)\) modifies the UV coefficients \((c_s,c_T)\) and the mixed (Robin) boundary condition of the dual scalar, shifting the critical temperature \(T_c\) while preserving large-\(N_c\) scalings and the high-\(T\) dominance of the AdS\(_4\) black brane (see \cite{SM}, Secs. V-VI). We focus in a low-temperature window with \(\beta/L_\theta \gg 1\), where the only two smooth, static, translation-invariant saddles are thermal Poincaré AdS\(_4\) and the AdS\(_4\) soliton. Their renormalized on-shell actions satisfy \(I[g_P]=0\) and \(I[g_{\rm sol}]<0\). In the undeformed theory the transition occurs at \(T_c \sim 1/L_\theta\), within this window \(\vartheta(u)=0\) selects the deconfined (Poincar\'e) branch (\(T>T_c\)) and \(\vartheta(u)=1\) selects the confining (soliton) branch (\(T<T_c\)). 
In the planar limit, large-\(N_c\) factorization implies that the deformation affects only one-point  data, e.g. sources and expectation values, without affecting the saddle competition.
 
We implement the deformation through mixed (Robin) boundary conditions for the bulk scalar, following standard GKPW holography \cite{Witten:2001ua,Skenderis2002}. Near the boundary $z\to0$, 
$
\Phi(z,x)=z^{d-\Delta}\phi_{(0)}(x)+z^{\Delta}\,\frac{\langle\mathcal O(x)\rangle}{2\Delta-d}+\cdots$ and $
m_{\rm bulk}^{2}L^{2}=\Delta(\Delta-d),
\label{eq:FG-BF-final}
 $
so $\phi_{(0)}$ sources $J$ and $\Delta$ obeying the Breitenlohner-Freedman relation \cite{Breitenlohner:1982jf}. 
The mixed boundary condition
 $
\phi_{(2\Delta-d)}(x)=\kappa_{\rm bdy}\,\phi_{(0)}(x),
 $ implements the double-trace deformation. 
Matching the stress tensor fixes $L^{d-1}/G_{d+1} \sim N^{2}_c$ \cite{Henningson:1998ey}, so that $c_{T}\propto (L^{d-1}/G_{d+1})\sim N_c^{2}$ and $\langle TT\rangle$ scales accordingly \cite{Balasubramanian:1998sn}. 

The classical supergravity regime requires both large $\lambda =g ^{2}N_c\gg1$, ensuring that higher-derivative corrections are small, and large $N_c$.   
In this regime, the halting deformation enters only through the boundary coupling
$\partial_z \Phi\big|_{z=0} \;=\; \kappa_{\rm bdy}\,\varphi_{(0)}\,, \kappa_{\rm bdy}\propto g_{\rm halt}\,L$
that eventually determines which bulk phase is realized. Since the halting predicate is uncomputable, the resulting choice of dominant bulk geometry is itself undecidable within the holographic picture (see  \cite{SM}, Sec. V).

\emph{Vacuum selection on $S^1_\beta\times\mathbb R^2$ (Casimir Test).} As an explicit example, consider the boundary metric
\(
  \gamma^{(0)}_{ij}dx^{i}dx^{j}=d\tau^{2}+dx^{2}+dy^{2}, \, 0\le\tau<\beta,
\)
which describes a spatially infinite system compactified on a Euclidean time circle of circumference $\beta$. If the theory is gapless, the vacuum is {scale-and } translation-invariant with vanishing expectation value of the stress tensor $\langle T_{ij}\rangle=0$. The corresponding bulk saddle is Poincar\'e AdS$_4$, where the Brown-York tensor also vanishes \cite{Osborn:1993cr,Henningson:1998ey}. If the spectrum is gapped, compactification on $S^1_\beta$ gives a negative Casimir energy. For a free massive scalar of mass $m_*$ with periodic boundary conditions for $\gamma^{(0)}_{ij}$, zeta-function methods \cite{Elizalde:1994gf,Elizalde:2007du} give
\begin{equation}
  \varepsilon_{\mathrm{Cas}}(\beta,m_{*})
  =-\frac{m_{*}^{2}}{2\pi^{2}\beta}\sum_{n=1}^{\infty}\frac{K_{1}(m_{*}n\beta)}{n} \, < 0
  \label{eq20}
\end{equation}
with $K_1$ the modified Bessel function. As expected, the Casimir energy vanishes as $m_*\beta\!\to\!\infty$. The dual saddle is in this case the AdS$_4$ soliton \cite{SM}, whose Brown-York tensor matches \eqref{eq20} at leading order in $G_4$
{$
  \langle T_{\tau\tau}\rangle_{\mathrm{sol}}
  =-\frac{r_{0}^{3}}{16\pi G_{4}L^{4}},    \langle T_{xx}\rangle=\langle T_{yy}\rangle=-\tfrac12\langle T_{\tau\tau}\rangle,
  \label{eq21}
$ } with $r_{0}=\frac{4\pi L^{2}}{3L_{\theta}}$, ensuring the expected tracelessness \cite{Horowitz:1998ha}. Putting these results together lead to the following cases:
\begin{equation}
  \langle T^{i}{}_{i}\rangle=
  \begin{cases}
    0, & m_{\mathrm{eff}}^{2}=0 \\
    -\dfrac{3\,r_{0}^{3}}{16\pi G_{4}L^{4}}<0, & m_{\mathrm{eff}}^{2}>0 
  \end{cases}
\end{equation}
Therefore, $\vartheta(u)$ determines whether the vacuum stress tensor vanishes (gapless CFT/ Poincar\'e AdS$_4$) or carries a negative Casimir energy (gapped CFT/AdS$_4$ soliton). This stress-tensor makes the holographic undecidability concrete: the boundary signature of the vacuum, just like the bulk geometry itself, is fixed by an uncomputable predicate (see Sec.~V of  the Supplemental Material \cite{SM}). Under standard holographic assumptions (large $N_c$, large $\lambda $, classical gravity, and static translation-invariant bulk geometries), deciding the dominant Euclidean AdS$_4$ saddle on $S^1_\beta\times\mathbb R^2$ is therefore algorithmically undecidable (see \cite{SM} Sec. VI). {So, here Tarski’s external truth predicate \cite{Tarski:1936, FaizalJHAP25} would be needed to fix which semiclassical saddle (Poincar\'e AdS$_4$ or the AdS$_4$ soliton) emerges as the dominant spacetime.}

\emph{String theoretic interpretation.} A natural concern is that semiclassical holography is highly nongeneric. Large $N$ and large ’t~Hooft coupling are necessary but not, by themselves, sufficient to ensure an Einstein gravity regime. In known examples, semiclassical bulk locality is tied to additional CFT structure, in particular a sparse set of low-dimension single-trace operators together with a parametrically large gap to higher-spin and stringy excitations \cite{Heemskerk:2009pn,Heemskerk:2010ty,Afkhami-Jeddi:2016ntf}. We therefore phrase our conclusions in the explicitly conditional form adopted throughout the paper. We construct a computable family of large $N$ theories in which an undecidable microscopic input selects a relevant deformation, and we analyze the resulting competition of smooth Euclidean bulk fillings under the standard GKPW hypothesis that the theories under consideration admit a semiclassical holographic description \cite{Maldacena:1997re,Gubser:1998bc,Witten:1998qj}.
 
 The simplified model is chosen only to keep the reduction and the saddle comparison completely explicit. The underlying computational logic is not model dependent. In any holographic CFT, one follows the same sequence of steps: specify the deformation, identify the dual bulk field and the corresponding boundary data, construct the candidate saddles compatible with the prescribed boundary geometry and sources, and determine dominance by comparing the renormalized on-shell actions using standard holographic renormalization \cite{Henningson:1998gx,Skenderis2002}. In this precise sense, our construction provides a prototype for analogous computations in any holographic setting where the dual is established and the semiclassical gravitational path integral is under quantitative control, including benchmark dual pairs such as planar $\mathcal N=4$ super-Yang-Mills theory and ABJM theory \cite{Maldacena:1997re,ABJM:2008}.

From a string-theoretic viewpoint, the extra ingredient required to export the mechanism to such top-down settings is not a new dynamical sector, but a choice of boundary coupling. In the AdS/CFT dictionary, switching on a relevant deformation corresponds to prescribing boundary data for the dual bulk scalar, usually by fixing the non-normalizable mode in standard quantization and, more generally, by imposing mixed boundary conditions when multi-trace deformations are included \cite{Witten:2001ua,Hartman:2006dy}. Equivalently, one chooses a point in the space of couplings of the brane worldvolume theory. In the bulk this is implemented by turning on the corresponding supergravity mode with specified asymptotics, often interpretable in terms of moduli in the underlying brane construction. The role of the undecidable predicate $\vartheta(u)$ is then minimal, as it only selects which value of a chosen relevant coupling is realized.

To make the export to top-down settings concrete while staying within the same semiclassical logic, one may place the boundary theory on $S^{d-1}\times S^1$ (or on a spatial circle), and work in regimes where more than one smooth Euclidean bulk filling is admissible for the same prescribed boundary geometry and sources. A protected relevant deformation then shifts the renormalized on-shell action difference between the competing saddles. In planar $\mathcal N=4$ super-Yang-Mills theory a canonical example is the $\mathcal N=2^{\ast}$ mass deformation \cite{Pestun:2007rz,Pilch:2000ue,ChenLin:2016jhs}, while in ABJM one can consider supersymmetry-preserving mass deformations with a well-developed large-$N$ and holographic description \cite{Gomis:2008vc,Nosaka:2016kpq,Jang:2017rqn}. In suitable deformation families of this type, varying the relevant coupling can, in favorable cases, lead to distinct semiclassical saddles with qualitatively different infrared behavior, in direct analogy with the branch structure exhibited by our explicit computable family. Once the boundary choice is made, the remainder of the analysis reduces to the standard semiclassical comparison of admissible Euclidean fillings via their renormalized on-shell actions.

\emph{Conclusions.} We have shown that undecidability in the spectral-gap problem for quantum many-body systems can be faithfully transmitted, via the AdS/CFT correspondence, into the bulk geometry selection problem of semiclassical AdS gravity. To our knowledge, this provides the first concrete realization of a G\"odel-Turing type obstruction in a holographic setting. The validity of this conclusion is conditioned on standard assumptions in holography: suppression of higher-derivative and string corrections, uniqueness of smooth saddles under fixed boundary conditions, and the validity of the large-N planar limit. 

 While our result relies on a specific holographic construction, it suggests that undecidability can be a robust property of holographic duality and can constrain the very emergence of spacetime itself. A natural question is whether similar undecidable structures might arise in other holographic settings, such as black hole microstates, string compactifications, or cosmological duals. If so, undecidability could represent a pervasive feature of quantum gravity. This would imply that many aspects of spacetime structure could be beyond algorithmic reach \cite{Faizal:2025gip}, and would require a form of non-algorithmic understanding that transcends the G\"odel-Turing barrier \cite{Faizal:2024rod, FaizalJHAP25}. Just like algorithmic computation, such non-algorithmic understanding is intrinsic to nature. 

\onecolumngrid
\section{Supplemental Material}
\appendix

\section{Spectral-Gap Undecidability (Full Details)}
\label{app:spectral-gap}


This section presents in detail the CPW construction \cite{Cubitt:2015xsa}, sketched in the Sec.  {Preliminaries} of the main text.  

For each positive integer \(L\) consider the square torus
\begin{equation}
  \Lambda_L=(\mathbb Z/L\mathbb Z)^2,\qquad |\Lambda_L|=L^2.
  \label{eqA:lattice}
\end{equation}
At each site \(x\in\Lambda_L\) lives a \(d\)-level register with local Hilbert space \(\mathcal H_{\loc}\cong\mathbb C^d\). The composite space is
\begin{equation}
  \mathcal H_L=\bigotimes_{x\in\Lambda_L}\mathcal H_{\loc}.
  \label{eqA:HL}
\end{equation}
A translation-invariant nearest-neighbour model is specified by a single self-adjoint plaquette operator \(H\in\mathcal B(\mathcal H_{\loc}^{\otimes 4})\) tiled periodically
\begin{equation}
  H_L=\sum_{p\subset\Lambda_L}H_p,\qquad H_p\cong H,\qquad \|H_L\|\le \|H\|\,L^2.
  \label{eqA:HLsum}
\end{equation}

Lieb-Robinson (LR) bounds hold for such finite-range interactions: there exist \(C,v,\xi>0\) such that for disjointly supported local observables \(A,B\)
\begin{equation}
  \|[A(t),B]\|\le C\,\|A\|\,\|B\|\,|\mathrm{supp}\,A|\,
   \exp\!\Big(-\tfrac{\mathrm{dist}(\mathrm{supp}A,\mathrm{supp}B)-v|t|}{\xi}\Big),
  \label{eqA:LR}
\end{equation}
where \(A(t)=e^{\mathrm i H_L t}A\,e^{-\mathrm i H_L t}\) and \(\xi\) is a locality length \cite{Lieb:1972wy}. This bound establishes quasi-locality in the thermodynamic limit, defining a \(C^*\) algebra of observables and an emergent light cone with velocity \(v\).

Let \(E_0(L)\le E_1(L)\le E_2(L)\le\cdots\) be the eigenvalues of \(H_L\). The finite-volume spectral gap is
\begin{equation}
  \gamma_L=E_1(L)-E_0(L).
  \label{eqA:finite_size_gap}
\end{equation}
A uniform spectral gap means that there exists $\Delta>0$ with \(\gamma_L\ge\Delta\) for all \(L\). Uniform gaps imply exponential clustering (correlations \(\sim e^{-\mathrm{dist}/\ell}\), with \(\ell\le v/\Delta\)) and phase stability under sufficiently small local perturbations \cite{Hastings:2005pr,Bravyi:2011bey}. If instead
\begin{equation}
  \liminf_{L\to\infty}\gamma_L=0,
  \label{eqA:gapless_def}
\end{equation}
the family is gapless: arbitrarily low-energy, long-wavelength excitations exist along some unbounded size sequence and correlations often decay algebraically \cite{Nachtergaele:2010}.

The finite-volume gap feeds into the Gelfand-Naimark-Segal (GNS) representation in the thermodynamic limit. If \(\gamma_L\to\gamma_\infty>0\), the GNS Hamiltonian has a true gap above its cyclic vector. If \(\gamma_L\to0\), arbitrarily low-energy delocalized excitations arise at length-scale \(L\), typical of criticality or topological order.

We can now phrase the  {spectral-gap promise problem}. A computable map takes a finite binary string \(u\) to a plaquette interaction \(H(u)\), hence to \(H_L(u)\) and \(\gamma_L(u)\). The decision task is:
\begin{equation}
  \inf_{L\in\mathbb N}\gamma_L(u)>0\ \ \text{(declare  {gapped})},\qquad
  \liminf_{L\to\infty}\gamma_L(u)=0\ \ \text{(declare  {gapless})}.
  \label{eqA:promise}
\end{equation}

Let \(U\) be a universal,  {reversible} Turing machine. Given input \(u\in\{0,1\}^*\), its configuration after \(t\in\{0,\dots,T\}\) steps is \(\ket{\psi(t)}\) in a finite-dimensional register \(\mathcal H_{\mathrm{UTM}}\). Introduce a unary clock \(\mathcal H_{\clock}=\mathrm{span}\{\ket{t}\}_{t=0}^{T}\). The history state
\begin{equation}
  \ket{\Psi_{\mathrm{hist}}}=\frac{1}{\sqrt{T+1}}\sum_{t=0}^{T}\ket{t}\otimes\ket{\psi(t)}
  \label{eqA:H_state}
\end{equation}
coherently records the whole trajectory \cite{Feynman:1981tf,Feynman:1986vej,A_Yu_Kitaev_1997,Kitaev2002}.

A frustration-free, \(k\)-local Hamiltonian with \(\ket{\Psi_{\mathrm{hist}}}\) as a ground state is
\begin{equation}
  H_{\FK}=H_{\clock}+H_{\prop}+H_{\init}\ge 0,
  \label{eqA:H_FK}
\end{equation}
with a clock constraint \(H_{\clock}=\sum_{t=0}^{T-1}\Pi_{\mathrm{illegal}}^{(t,t+1)}\) penalizing violations of the unary pattern, and a propagation Laplacian
\begin{equation}
  H_{\prop}=\sum_{t=0}^{T-1}\Big(\ket{t}\!\bra{t}+\ket{t+1}\!\bra{t+1}
  -U_{t+1,t}\ket{t+1}\!\bra{t}-U_{t,t+1}^\dagger\ket{t}\!\bra{t+1}\Big),
  \label{eqA:H_prop}
\end{equation}
which couples the clock step \(t\to t{+}1\) to the corresponding reversible gate \(U_{t+1,t}\) on the data and an input projector \(H_{\init}=\Pi_{\mathrm{bad\text{-}init}}\) fixing the tape to \(u\). One verifies that \(H_{\FK}\ket{\Psi_{\mathrm{hist}}}=0\).

On the legal clock subspace, \(H_{\prop}\) reduces to the path-graph Laplacian on \(T{+}1\) nodes, with eigenpairs
\begin{equation}
  \ket{\phi_k}=\sqrt{\frac{2}{T+1}}\sum_{t=0}^{T}\!\sin\!\Big(\frac{\pi k (t+1)}{T+1}\Big)\ket{t},\quad
  \lambda_k=2\Big(1-\cos\frac{\pi k}{T+1}\Big).
  \label{eqA:path_spec}
\end{equation}
Hence the propagation gap and the FK gap scale as
\begin{equation}
  \Delta_{\prop}=\lambda_1=2\!\Big(1-\cos\frac{\pi}{T+1}\Big)=\Theta\!\big((T+1)^{-2}\big),\qquad
  \Delta(H_{\FK})\ge c_0\,\Delta_{\prop}=\Theta\!\big((T+1)^{-2}\big)
  \label{eqA:FK_gap}
\end{equation}
for some universal \(c_0\in(0,1)\) (projection-lemma arguments~\cite{Kempe:2004sak,Aharonov:2009jnw}).

To spectrally distinguish halting vs.\ non-halting, add an output/halting projector
\begin{equation}
  H_{\halt}=\delta_T\,\Pi_{\halt},\qquad
  \Pi_{\halt}=\ket{T}\!\bra{T}\otimes\ket{\psi(T)}\!\bra{\psi(T)}.
  \label{eqA:H_halt}
\end{equation}
Choosing the sign and scale of \(\delta_T\) to grow with \(T\) (polynomial or stretched-exponential) ensures that halting histories produce a finite  {energy density} shift once embedded across many independent FK chains in 2D.

Each term in \(H_{\FK}+H_{\halt}\) acts on a constant number of neighboring qudits. Standard perturbative gadget constructions reduce any \(k\)-local terms to a nearest-, 2-local form on a bounded-degree graph without closing the gap by more than a constant factor \cite{Kempe:2004sak}. These locality features underpin both physical plausibility and the QMA-completeness of the \(k\)-local Hamiltonian problem for \(k\ge2\) \cite{Kempe:2004sak}.


\begin{theoremA}[Cubitt-Pérez-García-Wolf \cite{Cubitt:2015xsa}]
There exists a Turing machine that, given any input \(u\), outputs a translation-invariant, finite-dimensional, nearest-neighbour interaction \(H(u)\) on \(\mathbb Z^2\) such that, for the torus-restricted Hamiltonians \(H_L(u)\) and their gaps \(\gamma_L(u)\) in \eqref{eqA:finite_size_gap}, exactly one holds:
\begin{equation}
  \text{(i) if the encoded UTM halts on \(u\), then } \exists\,\Delta(u)>0:\ \gamma_L(u)\ge\Delta(u)\ \forall L;
  \quad
  \text{(ii) otherwise } \liminf_{L\to\infty}\gamma_L(u)=0.
\end{equation}
No algorithm can decide from the finite description of \(H(u)\) which branch holds.
\end{theoremA}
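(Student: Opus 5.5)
The plan follows the Cubitt--P\'erez-Garc\'ia--Wolf strategy, assembled from the ingredients above. It has three layers: a two-dimensional aperiodic-tiling layer, a Feynman--Kitaev computational layer with a halting penalty, and a ``dense spectrum'' gapless layer. The last two are combined so that halting flips the ground state between a gapped product state and a gapless critical state.

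First, I will take a translation-invariant nearest-neighbour classical tiling Hamiltonian $H_{\Tile}$ whose ground states are Robinson-type aperiodic tilings. These tilings contain squares of every size $4^n$ with density $\Theta(4^{-2n})$. Along one edge of each square I will place a copy of $H_{\FK}+H_{\halt}$ from \eqref{eqA:H_FK}--\eqref{eqA:H_halt}, encoding the UTM on input $u$ with tape length equal to the square size. To make $u$ part of a finite interaction, I would not hard-code it. Instead I would encode it in the phase of a local term (a quantum phase-estimation prologue), which is the standard device that keeps the local dimension $d$ independent of $u$ while $H(u)$ stays computable.

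Second, I will compute the ground energy of this combined Hamiltonian $H_u$. If the UTM never halts, each square contributes a small negative energy, engineered via a term $-\alpha_n$ of order $4^{-n}$ times a tunable constant. This makes the energy density slightly negative, so the ground energy diverges to $-\infty$ as $L\to\infty$. If it halts at time $T$, every square larger than the runtime picks up a positive shift $\delta_T/(T+1)$, using the $\Theta((T+1)^{-2})$ gap from \eqref{eqA:FK_gap} to control the low-lying spectrum. This makes the ground energy go to $+\infty$. Proving these asymptotics requires uniform lower bounds on the FK energy of each segment in terms of the spectral data \eqref{eqA:path_spec}. This step is the main obstacle. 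One must show that no cheating configurations (broken tilings, illegal clocks, superpositions across squares) lower the energy, which needs a careful projection/perturbation lemma with explicitly controlled constants.

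Third, I will take $H(u)=H_u\otimes\mathbb 1 + \mathbb 1\otimes H_d + H_{\mathrm{guard}}$, plus a zero-energy product state $\ket{0}$. Here $H_d$ is any translation-invariant gapless Hamiltonian with dense spectrum in $[0,\infty)$, for instance critical XY. The guard term penalizes any site not in $\ket 0$ that neighbours one in $\ket 0$. If $\lambda_0(H_u)\to+\infty$, the ground state is $\ket{0}^{\otimes L^2}$ with a gap of order $1$ from the guard terms, giving branch (i). If $\lambda_0(H_u)\to-\infty$, the low spectrum is $\lambda_0(H_u)+\mathrm{spec}(H_d)$, which becomes dense, giving branch (ii). Undecidability then follows by reduction: a decider for the two branches would decide halting of $U$ on $u$, contradicting Turing.
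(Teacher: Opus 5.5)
Your architecture (tiling, QPE-encoded Feynman--Kitaev segments, a dichotomy $\lambda_0(H_u)\to\pm\infty$, then the switch $\ket{0}\oplus(\mathcal H_u\otimes\mathcal H_d)$ with guard terms) is the genuine CPW route. It differs from the paper's sketch, which tries to read the gap directly off the FK chains using amplifier registers $s_k$. However, the energy bookkeeping in your second step fails, and this is a more basic obstacle than excluding cheating configurations.

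\textbf{The halting case still has extensive negative energy.} In the halting case, the squares too small for the computation to finish carry exactly the same bias $-\alpha_n$ as in the non-halting case.
\begin{itemize}
\item There are $\Theta(L^2 4^{-2n})$ squares of level $n$. With $\alpha_n\sim c\,4^{-n}$ they contribute $-\Theta(c\,L^2)$, a coefficient fixed by the smallest levels and independent of the runtime.
\item The positive contribution comes only from levels $n\ge n_0$, where $4^{n_0}$ is the size needed to halt.
\item In a translation-invariant finite-dimensional interaction the halting coupling is a fixed constant; $\delta_T$ cannot be made to grow with $T$. So a square's shift is at most $O(4^{n})$, the length of the edge carrying the chain, and the total positive part is $O(L^2 4^{-n_0})$.
\end{itemize}
Hence for inputs that halt late, $\lambda_0(H_u)\to-\infty$ in the halting case too. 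Your ``tunable constant'' would have to satisfy $c\lesssim 4^{-n_0}$, i.e.\ depend on the uncomputable runtime.

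\textbf{No choice of profile rescues this.} With independent squares, as your accounting assumes, the energy density is $e(u)=\sum_n\rho_n E_n(u)$. Here $E_n(u)$ is the ground energy of the explicit finite segment Hamiltonian at level $n$, $\rho_n=\Theta(4^{-2n})$, and $|E_n(u)|=O(4^n)$. This is a real number computable uniformly in $u$, since the tail is $O(4^{-N})$. If non-halting meant $e(u)<0$ and halting meant $e(u)\ge 0$, then non-halting would be semi-decidable. Together with semi-decidability of halting, this would solve the halting problem. So in a construction of this type, the non-halting (and not-yet-halted) segments must contribute zero energy density: the dichotomy can only be positive versus zero density.

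\textbf{Consequences for your third step.} It must be re-argued without an extensive $\lambda_0(H_u)\to-\infty$ to push the non-halting ground state into the $\mathcal H_u\otimes\mathcal H_d$ sector. That means controlling the finite-size ground energy of $H_d$ and the mixed sectors directly. The same small-scale bias also breaks your branch-(i) gap claim. A droplet of non-$\ket{0}$ sites of linear size $\ell$, below the halting scale, pays only $O(\ell)$ in guard terms but gains $\Theta(c\,\ell^2)$ from the bias. The decomposition of the spectrum into $\{0\}$, $\mathrm{spec}\,H_u+\mathrm{spec}\,H_d$ and a remainder in $[1,\infty)$ needs $h_u$ and $h_d$ whose restrictions to open regions have non-negative energy. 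A negatively biased $h_u$ violates exactly this.
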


 {Sketch of the (constructive) map \(u\mapsto H(u)\).} The FK history-state construction encodes reversible computation in the ground space of a 1D chain \cite{A_Yu_Kitaev_1997,Kempe:2004sak,Aharonov:2009jnw,AharonovKempeRegev2009}. On the other hand a hierarchical aperiodic tiling (Robinson-type) embeds many such chains coherently inside a 2D, translation-invariant spin system using only nearest-neighbour constraints. An energy-amplification gadget magnifies halting-induced local energy shifts so that they persist in the thermodynamic energy density. If a decision procedure for the spectral-gap promise problem existed, composing it with \(u\mapsto H(u)\) would decide halting, a contradiction. A 1D refinement also exists \cite{Cubitt:2015xsa}.

 At each site \(z=(x,y)\in\mathbb Z^{2}\) take
\begin{equation}
  \mathcal H_{\loc}=\mathcal H_{\mathrm{comp}}\otimes\mathcal H_{\mathrm{tile}},
  \qquad
  \dim\mathcal H_{\mathrm{tile}}=28\ \text{(Robinson prototiles)},\quad
  \dim\mathcal H_{\mathrm{comp}}=d_{\mathrm c}\ge 12,
  \label{eqA:loc}
\end{equation}
where \(\mathcal H_{\mathrm{tile}}\) is spanned by Robinson tiles \cite{Robinson1971} and \(\mathcal H_{\mathrm{comp}}\) contains work/clock/marker qubits (defined below). The many-body space is \(\mathcal H=\bigotimes_z \mathcal H_{\loc}\).

 Let \(\Pi_{\mathrm{Rob}}^{(z,z')}\) project onto a valid Robinson pair across nearest neighbours \(\langle z,z'\rangle\). Define
\begin{equation}
  H_{\Tile}=\sum_{\langle z,z'\rangle}\bigl(\openone-\Pi_{\mathrm{Rob}}^{(z,z')}\bigr)\ge 0.
  \label{eqA:Htile}
\end{equation}
Its ground-state manifold consists of perfect Robinson tilings \(\Omega_{\mathrm{Rob}}\), unique up to global translations \cite{Robinson1971}. Each tiling displays a deterministic hierarchy of concentric squares of side \(5^k\). Mark the horizontal midline of every maximal square by a distinguished tile \(\square\); sites carrying \(\square\) will host the computation.

 Enumerate \(\square\)-sites along a given horizontal line as \(\{w_0,\dots,w_{T-1}\}\), with \(T=2^{p(|u|)}\) (some fixed polynomial \(p\)). At each \(w_t\) place three qubits: a work qubit \(q_t\), a unary clock bit \(c_t\), and a domain-wall marker \(m_t\). The five-local FK Hamiltonian acts along each line:
\begin{equation}
  H_{\FK}^{(\mathrm{lines})}
  =\sum_{t=0}^{T-1}\!\bigl(H_{\prop}^{(t)}+H_{\clock}^{(t)}\bigr)+H_{\init}+H_{\halt}.
  \label{eqA:HFKline}
\end{equation}
Here \(H_{\prop}^{(t)}\) enforces the reversible update on \((q_t,q_{t+1})\) conditioned on \((c_t,c_{t+1})\); \(H_{\clock}^{(t)}\) penalizes illegal substrings \(\ket{10}\); \(H_{\init}\) fixes the leftmost \(|u|\) work qubits to \(u\); and \(H_{\halt}\) is as in \eqref{eqA:H_halt} (with a sign/scale chosen below). The marker \(m_t\) isolates the chain from non-\(\square\) background sites.

 After blocking \(5\times5\) super-tiles, all terms are translation-invariant and act on at most five neighboring blocks. Tiling and computation layers commute (each acts on disjoint tensor factors, with computation only where \(\square\) occurs). The frustration-free ground space at \(\delta_T=0\) is
\begin{equation}
  \ker H_{\Tile}\cap \ker H_{\FK}^{(\mathrm{lines})}
  =\mathrm{span}\bigl\{\ket{\mathrm{Rob}}\otimes\ket{\Psi_{\mathrm{hist}}^{(\mathrm{lines})}}: \ket{\mathrm{Rob}}\in\Omega_{\mathrm{Rob}}\bigr\}.
  \label{eqA:gs_space}
\end{equation}

 If \(U\) halts within \(T\), the history includes \(\ket{T}\) and \(H_{\halt}\) shifts the line’s ground energy by \(-\delta_T/(T+1)\). If \(U\) does not halt, \(H_{\halt}\) annihilates the line’s ground state and the line’s gap is \(\Theta(T^{-2})\) \cite{Aharonov:2009jnw}.


To ensure the halting/non-halting distinction survives at the level of  {energy density} in the thermodynamic limit, we augment the 2D model with a classical amplifier coupled to the hierarchical tiling.

\paragraph*{Amplifier registers and Hamiltonian.}
For each level \(k\ge 0\) in the Robinson hierarchy there is a unique square \(\Lambda_k\) of side \(L_k=5^k\) centred at \(c_k\). Place a two-state register \(s_k\in\{0,1\}\) at \(c_k\) and define the diagonal, ultralocal operator
\begin{equation}
  H_{\ampA}=\sum_{k\ge 0}L_k^{\,q}\,s_k,\qquad q>4\ \text{fixed}.
  \label{eqA:Hamp}
\end{equation}
\(H_{\ampA}\ge 0\) and commutes with all other terms.

\paragraph*{Coupling to halting projectors.}
Replace \(H_{\halt}\) along each marked line inside \(\Lambda_k\) by
\begin{equation}
  H_{\halt}'=-\delta\sum_{k\ge 0} s_k\,\Pi_{\halt}^{(k)},
  \label{eqA:Hhaltprime}
\end{equation}
where \(\Pi_{\halt}^{(k)}\) projects onto the terminal configuration of the line inside \(\Lambda_k\), and choose \(\delta>0\) small compared to \(L_k^{\,q}\) for all \(k\). The full 2D Hamiltonian is then
\begin{equation}
  H^{(2D)}=H_{\Tile}+H_{\FK}^{(\mathrm{lines})}+H_{\ampA}+H_{\halt}'.
  \label{eqA:H2D}
\end{equation}

\paragraph*{Density accounting.}
In a valid tiling sector there are \(\Theta(L^2/\ell^2)\) marked lines of length \(\ell\) at scale \(\ell=5^k\). Choose \(T(\ell)=\Theta(\ell)\) and \(\delta_T=\alpha (T+1)^{1+\kappa}\) with \(\kappa>0\) so each active line’s halting shift is \(-\delta_T/(T+1)\sim -\alpha (T+1)^\kappa\). Weighting the line by \(s_k=1\) and the amplifier \(L_k^{\,q}\) while normalizing by area \(L^2\) yields a per-scale contribution to the energy density proportional to \(-\alpha\,\ell^{-2}\); summing over the \(O(\log L)\) scales remains \(O(1)\) as \(L\to\infty\). In contrast, for non-halting inputs, \(\Pi_{\halt}^{(k)}\) annihilates the history states and minimizing sets \(s_k=0\), so \(H_{\ampA}\) contributes nothing.


\begin{lemmaA}[Halting inputs yield a uniform gap]
\label{lemA:finite}
If the encoded \(U\) halts on input \(u\) within \(T=2^{p(|u|)}\) steps, then there exists \(\Delta_*>0\) (independent of \(L\)) such that the torus-restricted Hamiltonians \(H_L(u)\) from \eqref{eqA:H2D} obey \(\inf_L \gamma_L(u)\ge \Delta_*\).
\end{lemmaA}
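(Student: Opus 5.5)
The approach is to show that $H_L(u)$ from \eqref{eqA:H2D} is block diagonal, with every block that is not a valid ground-tiling sector lying at least a constant above the ground energy, and then to bound the gap inside the valid sectors scale by scale.

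The first step uses the fact that $H_{\Tile}$, $H_{\ampA}$ and the registers $s_k$ are all diagonal in the tile/amplifier product basis and commute with the computational layer. Hence $H_L(u)$ decomposes into a direct sum over classical tile configurations $\omega$ and amplifier strings $s=(s_k)$. Any $\omega$ containing a Robinson defect pays energy at least $1$ from $H_{\Tile}$, since each violated pair contributes $1$. The computational terms in that sector are bounded below by $-\delta\sum_k s_k$, and this cost is dominated by $L_k^{q}s_k$ because $\delta\ll L_k^{q}$. So defective sectors lie at least a constant $c_1>0$ above the best defect-free sector. (On tori with $L$ not a multiple of the supertile period the tiling is frustrated. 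There I would argue, as in CPW, that the ground sector is the minimally defected one, and that any excitation must either add a defect or change a computational state, which reduces to the two cases below.)

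The second step works inside a fixed valid tiling. There the Hamiltonian splits further into independent marked lines, one for each square $\Lambda_k$, each paired with its own $s_k$. For a single line at level $k$, the $s_k=0$ block is the bare Feynman–Kitaev chain, with ground energy $0$ and gap $\Theta(T_k^{-2})$ by \eqref{eqA:FK_gap}. The $s_k=1$ block has energy $L_k^{q}-\delta\,\langle\Pi_{\halt}^{(k)}\rangle$ plus the FK energy.

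The key point is the halting hypothesis. The computation halts within a fixed time $T_*=2^{p(|u|)}$ that is independent of $L$. Every line whose length exceeds $T_*$ therefore reaches the halting configuration, so its history state has weight $1/(T_k+1)$ on $\Pi_{\halt}^{(k)}$ and the projector acts nontrivially. I would use the density accounting above, with $\delta_T=\alpha(T+1)^{1+\kappa}$ chosen so that the halting reward overcomes the amplifier cost on all sufficiently large scales. Then for $k\ge k_*$, with $k_*$ depending only on $u$, the optimal $s_k$ is rigidly fixed. Flipping $s_k$ or leaving the history state changes the energy by an amount growing in $L_k$, hence at least some constant, and these energies are diagonal in $s_k$. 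For the finitely many scales $k<k_*$, each line is a fixed finite-dimensional Hamiltonian independent of $L$. Each therefore has a gap $g_k>0$, or a degeneracy that is exactly the same for all $L$ and is absorbed into a ground-space multiplicity.

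Setting $\Delta_*=\min\{c_1,\min_{k<k_*}g_k,\text{large-scale constant}\}$ gives an $L$-independent bound, because the lowest excitation of a direct sum of independent commuting pieces equals the smallest single-piece excitation.

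The hardest step is the large-scale one. The bare FK gap $\Theta(T_k^{-2})$ closes as $k$ grows, so the argument must show that the halting term, through the choice of $\delta_T$ and the amplifier, turns the low-lying excitations of long lines into states costing $\Omega(1)$. The plan is a projection-lemma argument in the style of Kempe–Kitaev–Regev: treat $-\delta\, s_k\Pi_{\halt}^{(k)}$ as a large perturbation restricted to the history subspace, and show that the effective Hamiltonian on its complement is pushed up by an amount scaling like $\delta_T/(T+1)\sim(T+1)^{\kappa}$, which beats the $T^{-2}$ closing. If this fails, because the negative term lowers the ground state rather than opening a gap, the fallback is to replace the history-state sector with CPW's actual mechanism: in the halting case the ground state becomes a classical product state and $H$ is exactly gapped by integrality.
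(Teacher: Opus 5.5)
Your outline matches the paper's: tiling defects cost $O(1)$, the finitely many small scales are $L$-independent, and large scales are isolated by the amplified halting term. You also correctly single out the large-scale step as the crux. That step fails, however, and it fails already within your own parameter choices.

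Step~1 uses $\delta\ll L_k^{q}$ for every $k$, which is the paper's prescription; note $L_0^{q}=1$. Since $0\le\Pi^{(k)}_{\mathrm{halt}}\le1$, this gives $H_{\mathrm{amp}}+H'_{\mathrm{halt}}\ge\sum_k(L_k^{q}-\delta)s_k$. That is strictly positive on every block with some $s_k=1$, whereas the $s\equiv0$ block reaches energy $0$. So the ground state has all $s_k=0$ \emph{whether or not $U$ halts}, and $H'_{\mathrm{halt}}$ vanishes identically on that block. The longest line, whose length grows with $L$, then keeps its bare Feynman--Kitaev excitation at $\lambda_1=2\bigl(1-\cos\frac{\pi}{T_k+1}\bigr)=\Theta(T_k^{-2})$, with $H_{\mathrm{init}}$ and $H_{\mathrm{clock}}$ still zero on it. Hence $\gamma_L(u)\to0$ exactly as in the non-halting case. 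Your claim that for $k\ge k_*$ ``leaving the history state changes the energy by an amount growing in $L_k$'' is false, because the value of $s_k$ that is rigidly selected is $0$.

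For the reward to win you would need $L_k^{q}+\mu_0<0$, where $\mu_0\in[-\delta,-\delta/(T_k+1)]$ is the ground energy of $H^{(k)}_{\mathrm{FK}}-\delta\Pi^{(k)}_{\mathrm{halt}}$. That forces $\delta>L_k^{q}$. But then $-\delta\sum_k s_k$ is no longer dominated by $H_{\mathrm{amp}}$, and your Step~1 control of defective tilings is lost.

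The mechanism you propose for the hard step is also the wrong one. The Kempe--Kitaev--Regev projection lemma needs a large \emph{positive} penalty. The term $-\delta s_k\Pi^{(k)}_{\mathrm{halt}}\le0$ can only lower each eigenvalue (Courant--Fischer), so nothing on the complement of the history state is pushed up, and the first excited level of a line stays at most $\lambda_1(T_k)$. The only route to an $\Omega(1)$ gap is the one you list as the failure mode. Because $\Pi^{(k)}_{\mathrm{halt}}$ in \eqref{eqA:H_halt} is rank one, interlacing gives $\mu_1\ge0$, so within the $s_k=1$ block the gap is at least $-\mu_0\ge\delta/(T_k+1)$. This ground-level drop is exactly what the paper's sketch invokes via $\delta_T/(T+1)\sim\alpha(T+1)^{\kappa}\gg1$.

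A consistent version would need three things:
\begin{itemize}
\item a scale-dependent reward with $\delta_k/(T_k+1)>L_k^{q}+c$;
\item a separate argument that defective tilings cannot harvest more halting reward than valid ones;
\item a genuinely rank-one projector. With a local halt-flag projector, bad-input data states also bind near $t=T_k$ and lie above the true bound state only by an amount exponentially small in $T_k$, which closes the gap again.
\end{itemize}
The paper's proof supplies none of this, so there is nothing to borrow from it. It asserts the isolation, concedes that propagation defects cost only $c/T^2$, and never reconciles the reward with $\delta\ll L_k^{q}$.

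Your fallback is CPW's actual mechanism, in which halting \emph{raises} the energy and the gapped ground state is a classical product state. That is the sound route, but it establishes the gap for a different Hamiltonian than \eqref{eqA:H2D}.
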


 {Proof.} In a valid tiling sector, each active line has its ground energy lowered by \(-\delta_T/(T+1)\), while excitations that (i) violate a tiling projector \((\openone-\Pi_{\mathrm{Rob}}^{(z,z')})\) or (ii) create a local clock/propagation defect cost at least a constant energy (the latter \(\ge c/T^2\) from \eqref{eqA:FK_gap}). The amplifier choice makes large-scale lines even more isolated (\(\delta_T/(T+1)\sim \alpha(T+1)^\kappa\gg 1\) for large \(T\)), so the bottleneck is the smallest base scale of the hierarchy, an \(O(1)\) constant that does not shrink with \(L\). Therefore \(\inf_L \gamma_L\ge \Delta_*>0\). \hfill\(\square\)


\begin{lemmaA}[Non-halting inputs are gapless]
\label{lemA:gapless}
If the encoded \(U\) does  {not} halt on input \(u\), then \(\displaystyle \liminf_{L\to\infty}\gamma_L(u)=0\).
\end{lemmaA}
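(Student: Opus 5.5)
The strategy is a variational (min–max) upper bound: for each system size I exhibit a low-energy excitation whose energy above the ground state goes to zero along a growing sequence of sizes. I would take \(L_k=5^k\), so that the torus \(\Lambda_{L_k}\) is commensurate with the Robinson hierarchy and contains at least one complete maximal square \(\Lambda_k\) with a marked line of length \(\ell=\Theta(5^k)\). On this line the clock has \(T(\ell)=\Theta(\ell)\) steps.

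\textbf{Step 1 (ground space).} First identify the ground state of \(H^{(2D)}\) in \eqref{eqA:H2D} for a non-halting input. Since \(U\) never reaches the halting configuration, every \(\Pi_{\halt}^{(k)}\) annihilates the line history states, so \(H_{\halt}'\) contributes nothing on them. Because \(H_{\ampA}\ge0\) is diagonal and commutes with all other terms, the minimum is attained at \(s_k=0\) for every \(k\). The ground space is therefore exactly the frustration-free space \eqref{eqA:gs_space} with all amplifier registers off, and \(E_0(L)=0\).

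\textbf{Step 2 (trial state).} Next I construct the trial excitation. On the largest marked line, replace the history state \(\ket{\Psi_{\mathrm{hist}}}\) by the first excited clock mode \(\ket{\phi_1}\) of \eqref{eqA:path_spec}, transported through the gates \(U_{t+1,t}\). Leave the tiling, all other lines, and the amplifier registers (\(s_k=0\)) unchanged. This state is orthogonal to the ground space: on that line it is orthogonal to \(\ket{\phi_0}\propto\) the history state, and it lies in the same tiling/data sector.

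\textbf{Step 3 (energy estimate).} The tiling, clock and initialization terms all vanish on the trial state, since the clock stays legal and the input is correct. The halting term also vanishes, because no history branch reaches the halting configuration. The amplifier term vanishes because \(s_k=0\). The only nonzero contribution is the propagation term, giving \(\lambda_1=2(1-\cos\frac{\pi}{T+1})=\Theta(T^{-2})=\Theta(L_k^{-2})\). By the min–max principle, \(\gamma_{L_k}\le \Theta(L_k^{-2})\to0\), hence \(\liminf_{L\to\infty}\gamma_L(u)=0\).

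\textbf{Main obstacle.} The delicate point is Step 1 together with the orthogonality claim in Step 2. On a torus whose size is not commensurate with \(5^k\), the Robinson tiling is frustrated, and boundary defects could lower or reshape the ground space. That is why I would restrict to the subsequence \(L_k=5^k\), or multiples of it, where perfect tilings exist. Restricting to a subsequence is sufficient, since only a \(\liminf\) statement is required.

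A second subtlety is that \(E_1\) must be compared within the full Hilbert space, not just the chosen sector. The argument handles this because the excitation is a genuine eigenvector with eigenvalue \(\lambda_1\) of a block that decouples from the rest: all terms commute with the tiling and amplifier layers.
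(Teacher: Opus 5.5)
Your proposal is the paper's proof made explicit. Non-halting forces all $s_k=0$, the low-energy sector reduces to independent FK chains, and the lowest propagation excitation of the longest line, of energy $\Theta(T(\ell)^{-2})$, drives $\gamma_L$ to zero along a growing sequence of sizes; your min--max trial state supplies the upper bound the paper merely asserts.

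There are two small caveats. First, restricting to $L_k=5^k$ does not achieve what you claim, because Robinson tiles are aperiodic and admit no defect-free tiling of any torus. So, exactly like the paper, you are relying on the unqualified ground-space description \eqref{eqA:gs_space}. Second, your trial mode should be the cosine eigenvector $\cos\bigl(\pi(t+\tfrac12)/(T+1)\bigr)$ of the path Laplacian, since the sine vector written in \eqref{eqA:path_spec} is not orthogonal to the uniform history state.
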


 {Proof.} If \(U\) never halts, \(\Pi_{\halt}^{(k)}\) annihilates history states, minimization sets all \(s_k=0\), and the amplifier vanishes. The remaining low-energy sector is a direct sum of independent FK chains at all hierarchical scales; each chain of length \(T(\ell)=\Theta(\ell)\) has a gap \(\Theta(T(\ell)^{-2})\) \cite{Aharonov:2009jnw}. As \(L\to\infty\), the hierarchy includes unbounded \(\ell\), hence along some sequence of volumes the longest line’s excitation energy goes to zero, proving \(\liminf_{L\to\infty}\gamma_L=0\). \hfill\(\square\)


For a fixed input \(u\), let \(\gamma_L(u)\) be the first excitation energy of the torus-restricted model \(H_L(u)\) associated to the local interaction \(H(u)\) in \eqref{eqA:H2D}. Define the spectral-gap promise predicate
\begin{equation}
  \Gap(u)=\Bigl[\ \inf_{L\ge 1}\gamma_L(u)>0\ \Bigr].
  \label{eqA:CPW_gap}
\end{equation}
By Lemmas~\ref{lemA:finite}-\ref{lemA:gapless},
\begin{equation}
  \Gap(u)=\text{true}
  \quad\Longleftrightarrow\quad
  \Halting(u)=\text{true}.
  \label{eqA:gap_halting_equiv}
\end{equation}
The map \(u\mapsto H(u)\) is computable (it prints a finite template of local matrices with rational-algebraic entries) \cite{Cubitt:2015xsa}. If a Turing machine could decide \(\Gap(u)\) from this finite description, composing it with \(u\mapsto H(u)\) would decide \(\Halting(u)\) for every \(u\), contradicting Turing’s theorem \cite{Turing:1937qvq}. Therefore the spectral-gap promise problem for this class of 2D, finite-range, translation-invariant Hamiltonians is Turing-undecidable.


We impose periodic boundary conditions. The tiling layer may admit \(O(1)\) translational degeneracy of perfect tilings; gaps are defined relative to this ground-sector manifold and are unaffected by this \(L\)-independent degeneracy. In the halting branch, the uniform gap implies exponential clustering with \(\ell\le v/\Delta_*\) and stability under sufficiently small local perturbations \cite{Hastings:2005pr,Bravyi:2011bey}. In the non-halting branch, arbitrarily low-energy, long-wavelength excitations arise along unbounded size sequences \cite{Nachtergaele:2010}. All statements are compatible with LR bounds \eqref{eqA:LR} and a finite group velocity \(v\). 



\section{Euclidean Path-Integral Representation}
\label{app:Euclidean}

As described in the corresponding section of the main text, we translate the finite-temperature partition function of the Cubitt-Pérez-García-Wolf (CPW) Hamiltonian into a three-dimensional Euclidean field theory whose couplings retain the binary halting predicate \(\vartheta(u)\in\{0,1\}\). The steps presented in this section, in full detail, are: Suzuki-Trotter decomposition to an anisotropic \(3\)D classical model, exact Hubbard-Stratonovich (HS) decoupling, continuum limit and renormalization, and large-\(N_{c}\) back-reaction bound. 

Let \(H_{L}(u)\) be the translation-invariant, finite-range CPW Hamiltonian on the torus $\Lambda_L$.
At inverse temperature \(\beta\), the partition function is
\begin{equation}
  Z_{L}(\beta;u)=\Tr_{\mathcal H_{L}}\!\bigl[e^{-\beta H_{L}(u)}\bigr].
  \label{eqS:Z_def}
\end{equation}
Split \(H_{L}\) into two sums of commuting terms placed on disjoint plaquette sublattices (``even/odd'' checkerboard):
\begin{equation}
  H_{L}=H_{L}^{\mathrm{even}}+H_{L}^{\mathrm{odd}},\qquad 
  [H_{L}^{\mathrm{even}},H_{L}^{\mathrm{odd}}]\neq0\ \text{in general}.
  \label{eqS:even_odd}
\end{equation}
Let \(N_{\tau}\in\mathbb N\) and \(a_{\tau}\equiv \beta/N_{\tau}\). The second-order Suzuki-Trotter product formula \cite{Trotter1959,Suzuki1976,PhysRevA.28.3575} gives
\begin{align}
  e^{-\beta H_{L}}
  &=\Bigl(e^{-\frac{a_{\tau}}{2}H_{L}^{\mathrm{even}}}\,
           e^{-a_{\tau}H_{L}^{\mathrm{odd}}}\,
           e^{-\frac{a_{\tau}}{2}H_{L}^{\mathrm{even}}}\Bigr)^{N_{\tau}}
    +R_{2},\label{eqS:STS}\\
  \|R_{2}\| &\le C\,\beta\,a_{\tau}^{2}\,\bigl\|[H_{L}^{\mathrm{even}},[H_{L}^{\mathrm{even}},H_{L}^{\mathrm{odd}}]]+[H_{L}^{\mathrm{odd}},[H_{L}^{\mathrm{odd}},H_{L}^{\mathrm{even}}]]\bigr\|
  =\mathcal O(\beta a_{\tau}^{2}),
  \label{eqS:STS_error}
\end{align}
where the bound follows from the Baker-Campbell-Hausdorff series and submultiplicativity of the operator norm.

Insert resolutions of identity between every factor in \eqref{eqS:STS}. Choose an on-site orthonormal basis \(\{\ket{\sigma_{x}}\}_{\sigma_{x}=1}^{d}\) and define \(\ket{\sigma^{(n)}}=\bigotimes_{x}\ket{\sigma_{x}^{(n)}}\) at Euclidean time-slice \(n\in\{0,\dots,N_{\tau}-1\}\). Then
\begin{align}
  Z_{L}(\beta;u)
  &=\sum_{\{\sigma^{(0)},\ldots,\sigma^{(N_{\tau}-1)}\}}
    \prod_{n=0}^{N_{\tau}-1}
    \matrixel{\sigma^{(n)}}{e^{-\frac{a_{\tau}}{2}H_{L}^{\mathrm{even}}}}{\sigma^{(n)}}
    \matrixel{\sigma^{(n)}}{e^{-a_{\tau}H_{L}^{\mathrm{odd}}}}{\sigma^{(n+1)}}
    \matrixel{\sigma^{(n+1)}}{e^{-\frac{a_{\tau}}{2}H_{L}^{\mathrm{even}}}}{\sigma^{(n+1)}}
    +\mathcal O(\beta a_{\tau}^{2})
\notag\\
  &\equiv \sum_{\{\sigma\}} 
     \exp\!\Bigl[-a_{\tau}\sum_{n=0}^{N_{\tau}-1}
          \bigl(\mathcal H^{\mathrm{even}}(\sigma^{(n)})
               +\mathcal H^{\mathrm{odd}}(\sigma^{(n)},\sigma^{(n+1)})\bigr)\Bigr]
     +\mathcal O(\beta a_{\tau}^{2})\,.
  \label{eqS:classical_part}
\end{align}
where periodic boundary conditions \(\sigma^{(N_{\tau})}\equiv\sigma^{(0)}\) implement the trace. 
The second-order Suzuki-Trotter decomposition maps the quantum trace to an anisotropic 3D lattice, see Fig.~\ref{figS:trotter_lattice}.

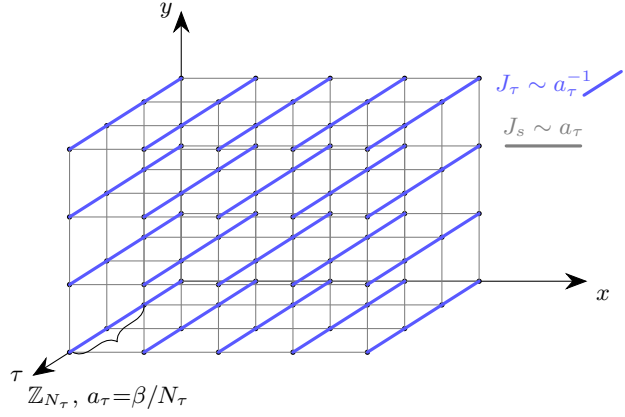
\begin{figure}[t]
\centering
\begin{tikzpicture}[scale=0.9, line cap=round, line join=round]
  \draw[-{Stealth[length=3mm]}] (0,0) -- (6,0) node[below right]{$x$};
  \draw[-{Stealth[length=3mm]}] (0,0) -- (0,4) node[left]{$y$};
  \draw[-{Stealth[length=3mm]}] (0,0) -- (-2.2,-1.4) node[left]{$\tau$};

  \def\nx{4}\def\ny{3}\def\nt{3}
  \def\dx{1.1}\def\dy{1.0}\def\dt{0.7} 

  \foreach \k in {0,...,\nt}{
    \foreach \i in {0,...,\nx}{
      \foreach \j in {0,...,\ny}{
        \coordinate (P\i\j\k) at ({\i*\dx + \k*0.0},{\j*\dy + \k*0.0});
        \coordinate (T\i\j\k) at ($ (P\i\j\k) + \k*(-0.55,-0.35)$);
        \fill (T\i\j\k) circle (1.1pt);
      }
    }
    \foreach \i in {0,...,\numexpr\nx-1\relax}{
      \pgfmathtruncatemacro{\ip}{\i+1}
      \foreach \j in {0,...,\ny}{
        \draw[gray] (T\i\j\k) -- (T\ip\j\k);
      }
    }
    \foreach \j in {0,...,\numexpr\ny-1\relax}{
      \pgfmathtruncatemacro{\jp}{\j+1}
      \foreach \i in {0,...,\nx}{
        \draw[gray] (T\i\j\k) -- (T\i\jp\k);
      }
    }
  }

  \foreach \k in {0,...,\numexpr\nt-1\relax}{
    \pgfmathtruncatemacro{\kp}{\k+1}
    \foreach \i in {0,...,\nx}{
      \foreach \j in {0,...,\ny}{
        \draw[blue!65,very thick] (T\i\j\k) -- (T\i\j\kp);
      }
    }
  }

  \draw[very thick,gray] (4.8,2.0) -- ++(1.1,0) node[midway,above]{$J_s \sim a_\tau$};
  \draw[very thick,blue!65] (6.5,3.1) -- ++(-0.55,-0.35) node[midway,left]{$J_\tau \sim a_\tau^{-1}$};

  \draw[decorate, decoration={brace, mirror, amplitude=6pt}]
        (-1.6,-1.05) -- (-0.55,-0.35)
        node[midway, below=18pt]{$\mathbb Z_{N_\tau}$, $a_\tau{=}\beta/N_\tau$};
\end{tikzpicture}
\caption{Anisotropic \(3\)D lattice \(\Lambda_L\times \mathbb Z_{N_\tau}\) from second-order Suzuki--Trotter; spatial links carry \(J_s\sim a_\tau\), temporal links \(J_\tau\sim a_\tau^{-1}\).}
\label{figS:trotter_lattice}
\end{figure}

Thus \(Z_{L}\) equals the partition function of an anisotropic \(3\)D classical model on \(\Lambda_{L}\times\mathbb Z_{N_{\tau}}\), with spatial and temporal couplings scaling as
\begin{equation}
  J_{s}\sim a_{\tau},\qquad J_{\tau}\sim a_{\tau}^{-1}.
  \label{eqS:anisotropy}
\end{equation}
The Lieb-Robinson bound \cite{Lieb:1972wy} imposes a maximal velocity \(v_{\mathrm{LR}}\) for information propagation. Taking \(a_{\tau}\to 0\) at fixed \(\beta\) rescales the temporal lattice spacing relative to the spatial one by \(v_{\mathrm{LR}}a_{\tau}\); for dynamical exponent \(z=1\), the scaling limit is isotropic up to irrelevant operators \cite{Fradkin2013,Sachdev2011}. Non-universal Trotter errors vanish as \(a_{\tau}^{2}\).

We now linearise the Boltzmann weights in \eqref{eqS:classical_part}. Consider one generic local contribution \(h(\sigma,\sigma')\) generated by \(H_{L}^{\mathrm{odd}}\). Let \(J^{A}(\sigma,\sigma')\) be bilinear densities (currents) built from on-site generators \(T^{A}\) (e.g.\ \(SU(d)\) adjoint), and \(K_{AB}\) a positive kernel. The multidimensional Gaussian identity reads
\begin{align}
  &\exp\!\Bigl[\frac{a_{\tau}}{2}\,J^{A}K_{AB}J^{B}\Bigr]
   =\int\!\frac{d^{N}\phi}{(2\pi)^{N/2}\sqrt{\det K^{-1}}}\,
     \exp\!\Bigl[-\frac{a_{\tau}}{2}\phi^{A}K^{-1}_{AB}\phi^{B}
                 +a_{\tau}\,J^{A}\phi^{A}\Bigr],\label{eqS:Gaussian}\\
  &\text{since }\ 
   \int d^{N}\phi \,e^{-\frac{1}{2}\phi^{T}A\phi + J^{T}\phi}
   =(2\pi)^{N/2}(\det A)^{-1/2} e^{\frac{1}{2}J^{T}A^{-1}J}.
\end{align}

We linearize each Boltzmann factor with an exact HS transform, Fig.~\ref{figS:HS}, introducing auxiliary fields $\phi^A$ and kernal $K$.
\begin{figure}[t]
\centering
\begin{tikzpicture}[>=Stealth, node distance=2.8cm]
  \tikzstyle{box}=[draw, rounded corners, inner sep=6pt]
  \node[box] (left) {$e^{-a_\tau\, h(\sigma,\sigma')}$};
  \node[box, right=of left] (right) {$\displaystyle \int\! d\phi^A\;
  e^{-a_\tau\left[\frac{1}{2}\phi^A K^{-1}_{AB}\phi^B - J^A(\sigma,\sigma')\phi^A\right]}$};
  --\draw[->, thick] (left) -- node[above]{HS} (right);
  \node[below=1.2cm of left] (lab1) {$J^A(\sigma,\sigma') \equiv \sigma T^A \sigma'$};
  \node[below=1.2cm of right] (lab2) {$\phi^A$ auxiliary, \(K\) positive kernel};
\end{tikzpicture}
\caption{Hubbard-Stratonovich linearisation of a generic Boltzmann factor.}
\label{figS:HS}
\end{figure}
Applying \eqref{eqS:Gaussian} plaquette-wise to all couplings in \eqref{eqS:classical_part} introduces auxiliary fields \(\phi^{A}(x,\tau)\). After the HS step, the partition sum becomes
\begin{equation}
  Z_{L}
  =\int\!\mathcal D\phi\;
   e^{-\sum_{n}\sum_{x}\frac{a_{\tau}}{2}\phi^{A}K^{-1}_{AB}\phi^{B}}
   \sum_{\{\sigma\}}
   \exp\!\Bigl[a_{\tau}\sum_{n}\sum_{x}J^{A}(\sigma,\sigma')\,\phi^{A}(x,n)\Bigr].
  \label{eqS:Z_HS}
\end{equation}
The spin sum factorises site-wise and time-slice-wise. Define the single-site generating functional
\begin{equation}
  \mathcal Z_{x}[\phi]
  =\sum_{\{\sigma_{x}^{(0)},\dots,\sigma_{x}^{(N_{\tau}-1)}\}}
   \prod_{n}
   \exp\!\Bigl[a_{\tau}\,J^{A}\!\bigl(\sigma_{x}^{(n)},\sigma_{x}^{(n+1)}\bigr)\,
                       \phi^{A}(x,n)\Bigr],
  \label{eqS:local_gen}
\end{equation}
so \(Z_{L}=\int\mathcal D\phi\,\exp\bigl[-S_{\mathrm{HS}}[\phi]\bigr]\) with
\begin{equation}
  S_{\mathrm{HS}}[\phi]
  =\sum_{n,x}\frac{a_{\tau}}{2}\phi^{A}K^{-1}_{AB}\phi^{B}
   -\sum_{x}\ln \mathcal Z_{x}[\phi].
  \label{eqS:HS_action}
\end{equation}
For slowly varying \(\phi\) one expands \(\ln\mathcal Z_{x}[\phi]\) in connected cumulants of \(J^{A}\). Up to quartic order (which is sufficient near the Gaussian fixed point),
\begin{align}
  -\ln \mathcal Z_{x}[\phi]
  &= -a_{\tau}\,\ev{J^{A}}_{0}\,\phi^{A}
     -\frac{a_{\tau}^{2}}{2}\,\ev{J^{A}J^{B}}_{0,c}\,\phi^{A}\phi^{B}\nonumber\\
  &\quad -\frac{a_{\tau}^{4}}{4!}\,\ev{J^{A}J^{B}J^{C}J^{D}}_{0,c}\,
        \phi^{A}\phi^{B}\phi^{C}\phi^{D} + \cdots,
  \label{eqS:cumulant}
\end{align}
where \(\ev{\cdots}_{0}\) denotes averages in the decoupled (\(\phi=0\)) classical model generated by the even plaquette weights. Translation invariance and symmetry force \(\ev{J^{A}}_{0}=0\). Rotational invariance in the internal space gives \(\ev{J^{A}J^{B}}_{0,c}=C_{2}\,\delta^{AB}\) and
\(\ev{J^{A}J^{B}J^{C}J^{D}}_{0,c}
 =C_{4}\,(\delta^{AB}\delta^{CD}
        +\delta^{AC}\delta^{BD}
        +\delta^{AD}\delta^{BC})\).
Passing to the continuum with \(a_{\tau}\sum_{n}\to\int d\tau\), \(a_{s}^{2}\sum_{x}\to\int d^{2}x\), gradients originate from the weak nonlocality of the HS kernel and the expansion \(\phi(x+\hat\mu)-\phi(x)\). The Euclidean Lagrangian density takes the \(O(N)\)-invariant form
\begin{equation}
  \mathcal L_{E}
  =\frac{Z_\phi}{2}(\partial_\mu \phi^A)(\partial_\mu \phi^A) \;+\; \frac{m_0^2}{2}\phi^A\phi^A
\;+\; \frac{\lambda_0}{4!}(\phi^A\phi^A)^2 \;+\; g_{\rm halt}\,\phi^A\phi^A 
  \label{eqS:Lagrangian_bare}
\end{equation}
with \(N\equiv \dim(\mathrm{Adj})\) and cutoff \(\Lambda\sim a_{\tau}^{-1}\). The last term comes from the CPW energy-amplification gadget: for halting inputs \(\vartheta(u)=1\) it provides a positive, size-dependent mass shift; for non-halting inputs \(\vartheta(u)=0\).

We canonically normalise \(\phi\to Z_{\phi}^{-1/2}\phi\) and rename renormalised couplings \(m_{0}^{2}\to m^{2}\), \(\lambda_{0}\to \lambda\).

In \(d=3\), \([\phi]=\tfrac12\), \([m^{2}]=2\), \([\lambda]=1\). Perform momentum-shell Wilsonian RG with a sharp cutoff \(\Lambda\) and step \(b=e^{\ell}>1\): split \(\phi=\phi_<+\phi_>\) with Fourier modes \(\phi_>(k)\) supported on \(\Lambda/b<|k|\le \Lambda\) and integrate out \(\phi_>\) at one loop.

\paragraph*{One-loop diagrams.} The relevant graphs are:

\(\bullet\) Tadpole for \(\phi^{2}\): 

\begin{figure}[htb]
\centering
\begin{tikzpicture}[line cap=round,line join=round]
  \draw[very thick] (-3,0) -- (-1,0);
  \draw[very thick] (1,0) -- (3,0);
  \fill (-1,0) circle (2pt);
  \fill (1,0) circle (2pt);
  \draw[very thick] (-1,0) -- (1,0);
  \draw[very thick] (0,0) circle (0.8);
  \node[below] at (-2,0) {$p$};
  \node[below] at ( 2,0) {$p$};
  \node[above] at (0,0.85) {tadpole};
\end{tikzpicture}
\caption{Tadpole diagram: one-loop correction to the two-point function (self-energy) used for \(\delta m^2\).}
\label{figS:tadpole}
\end{figure}
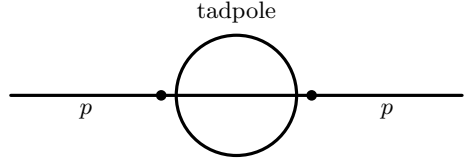

\(\bullet\) Fish for \(\phi^{4}\):
\begin{figure}[htb]
\centering
\begin{tikzpicture}[line cap=round,line join=round]
  \draw[very thick] (-3, 0.8) -- (-1, 0.4);
  \draw[very thick] (-3,-0.8) -- (-1,-0.4);
  \draw[very thick] ( 1, 0.4) -- ( 3, 0.8);
  \draw[very thick] ( 1,-0.4) -- ( 3,-0.8);
  \draw[very thick] (-1, 0.4) .. controls (0, 0.9) .. (1, 0.4);
  \draw[very thick] (-1,-0.4) .. controls (0,-0.9) .. (1,-0.4);
  \fill (-1,0.4) circle (2pt);
  \fill (-1,-0.4) circle (2pt);
  \fill ( 1,0.4) circle (2pt);
  \fill ( 1,-0.4) circle (2pt);
  \node[below] at (0,-1.2) {fish / bubble};
\end{tikzpicture}
\caption{Fish (bubble) diagram: one-loop correction to the four-point vertex used for \(\delta\lambda\).}
\label{figS:fish}
\end{figure}
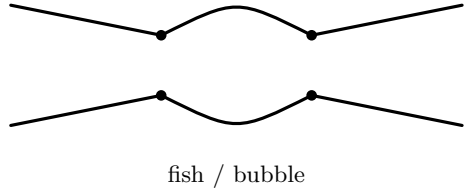

Using standard \(O(N)\) combinatorics,
\begin{align}
  \delta m^{2}
   &= \frac{N+2}{6}\,\lambda
      \int_{\Lambda/b<|k|\le \Lambda}\!\!\!\!\frac{d^{3}k}{(2\pi)^{3}}\,
      \frac{1}{k^{2}+m^{2}}
    = \frac{N+2}{6}\,\lambda\,
      \frac{1}{2\pi^{2}}
      \int_{\Lambda/b}^{\Lambda}\!dk\,\frac{k^{2}}{k^{2}+m^{2}}\nonumber\\
   &= \frac{N+2}{12\pi^{2}}\,\lambda
      \biggl[(\Lambda-\Lambda/b)
               -m\arctan\!\Bigl(\frac{\Lambda-\Lambda/b}{m}\Bigr)\biggr]\nonumber\\
   &\underset{m\ll \Lambda}{=}
      \frac{N+2}{12\pi^{2}}\,\lambda\,\Lambda\,(1-1/b)
      +\mathcal O(m),\label{eqS:delta_m2}
\end{align}
and
\begin{align}
  \delta \lambda
  &= -\frac{N+8}{6}\,\lambda^{2}
     \int_{\Lambda/b<|k|\le \Lambda}\!\!\!\!\frac{d^{3}k}{(2\pi)^{3}}\,
     \frac{1}{(k^{2}+m^{2})^{2}}
   = -\frac{N+8}{6}\,\lambda^{2}\,
     \frac{1}{2\pi^{2}}
     \int_{\Lambda/b}^{\Lambda}\!dk\,\frac{k^{2}}{(k^{2}+m^{2})^{2}}\nonumber\\
  &= -\frac{N+8}{12\pi^{2}}\,\lambda^{2}\,
     \biggl[\frac{1}{\Lambda/b+m}-\frac{1}{\Lambda+m}\biggr]
   \underset{m\ll\Lambda}{=}
     -\frac{N+8}{12\pi^{2}}\,\lambda^{2}\,\frac{1}{\Lambda}\,(1-1/b).
  \label{eqS:delta_lambda}
\end{align}

After shell integration, restore the cutoff by rescaling \(x\to x' = x/b\) so that \(k'\!=bk\). Fields scale as \(\phi'(x')=b^{(d-2)/2}\phi(x)=b^{1/2}\phi(x)\). Couplings transform as
\begin{align}
  m'^{2} &= b^{2}\bigl[m^{2}+\delta m^{2}\bigr],\qquad
  \lambda' = b\,\bigl[\lambda+\delta \lambda\bigr],\qquad
  g'_{\mathrm{halt}}= b^{2}\,g_{\mathrm{halt}},
\end{align}
where \(g_{\mathrm{halt}}\equiv \vartheta(u)\Lambda^{p(L)-2}\) multiplies \(\phi^{2}\).
Taking \(\ell=\ln b\) small and expanding to first order gives the differential RG equations
\begin{align}
  \frac{dm^{2}}{d\ell}
  &= 2m^{2} + \lim_{\ell\to 0}\frac{\delta m^{2}}{\ell}
   = 2m^{2} + \frac{N+2}{12\pi^{2}}\,\lambda\,\Lambda + \mathcal O(m\lambda),
  \label{eqS:beta_m2}\\
  \frac{d\lambda}{d\ell}
  &= \lambda + \lim_{\ell\to 0}\frac{\delta \lambda}{\ell}
   = \lambda - \frac{N+8}{12\pi^{2}}\,\lambda^{2},
  \label{eqS:beta_lambda}\\
  \frac{dg_{\mathrm{halt}}}{d\ell}
  &= 2g_{\mathrm{halt}}
     +\mathcal O(g_{\mathrm{halt}}\lambda).
  \label{eqS:beta_gh}
\end{align}
{
Define \(g \equiv \lambda/\Lambda\) and \(r \equiv m^2/\Lambda^2\).
To one loop in \(d=3\),
\begin{equation}
\frac{dg}{d\ell} = g - \frac{N+8}{12\pi^2}\,g^2 + O(g^3),\qquad
\frac{dr}{d\ell} = 2r + \frac{N+2}{12\pi^2}\,g + O(gr),
\end{equation}
and the halting deformation remains relevant:
\begin{equation}
\frac{d}{d\ell}\!\left(\frac{g_{\rm halt}}{\Lambda^2}\right)
= 2\left(\frac{g_{\rm halt}}{\Lambda^2}\right) + O(g)\,.
\end{equation}
This makes explicit that \(g_{\rm halt}\) competes with the Wilson-Fisher critical surface and drives the massive branch when \(\vartheta(u)=1\).
}
Eqs.~\eqref{eqS:beta_m2}-\eqref{eqS:beta_lambda} reproduce the canonical plus one-loop contributions for the \(O(N)\) model in \(d=3\) (the precise one-loop constants are scheme-dependent but irrelevant to the relevance classification). 
The one-loop flows are summarized schematically in Fig.~\ref{figS:RG}, for $\vartheta(u)=1$, the relevant $g_\text{halt}$ drives the flow to the massive phase.

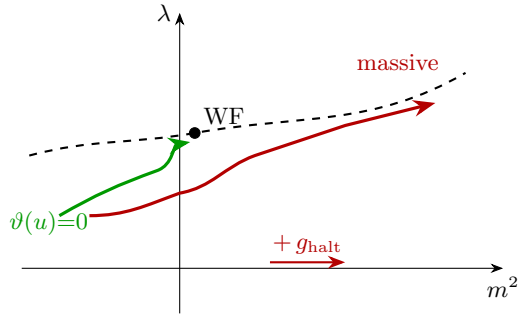
\begin{figure}[htb]
\centering
\begin{tikzpicture}[scale=1.0,>=Stealth]
  \draw[->] (-2.1,0) -- (4.3,0) node[below] {$m^2$};
  \draw[->] (0,-0.6) -- (0,3.4) node[left] {$\lambda$};

  \filldraw[black] (0.2,1.8) circle (2pt) node[above right] {WF};

  \draw[thick, dashed] (-2,1.5) to[out=15,in=190] (0.2,1.8) to[out=10,in=210] (3.8,2.6);

  \draw[->,very thick,green!60!black]
    (-1.6,0.7) to[out=30,in=200] (-0.3,1.3) to[out=20,in=190] (0.14,1.68);
  \node[green!60!black,anchor=east] at (-1.1,0.6) {$\vartheta(u){=}0$};

  \draw[->,very thick,red!70!black]
    (-1.2,0.7) to[out=0,in=200] (0.0,1.0) to[out=10,in=200] (1.0,1.5) --
    (2.2,1.9) -- (3.4,2.2);

  \node[red!70!black] at (2.9,2.75) {massive};

  \draw[->,red!70!black,thick] (1.2,0.08) -- (2.2,0.08);
  \node[red!70!black,above] at (1.7,0.08) {$+\,g_{\rm halt}$};
\end{tikzpicture}
\caption{Schematic one-loop RG flows in \((m^{2},\lambda)\) with the Wilson-Fisher fixed point. For \(\vartheta(u){=}1\), the relevant coupling \(g_{\rm halt}\) drives the flow to a massive phase. The critical-tuning trajectory (\(\vartheta{=}0\)) approaches but does not land on WF.}
\label{figS:RG}
\end{figure}

The operator \(\mathcal O=\phi^{A}\phi^{A}\) has scaling dimension \(\Delta_{\mathcal O}=1+\eta\) with \(\eta=\mathcal O(\lambda^{2})>0\) small, hence \(g_{\mathrm{halt}}\) is  {relevant} and grows as \(e^{2\ell}\) to leading order. 
If \(\vartheta(u)=0\) (non-halting input), then \(g_{\mathrm{halt}}=0\) and \(m^{2}\) can be tuned to the Wilson-Fisher critical surface, yielding a gapless continuum theory. If \(\vartheta(u)=1\) (halting input), \(g_{\mathrm{halt}}>0\) drives the flow to large positive \(m^{2}\), i.e.\ a massive phase. Thus the undecidable halting bit survives the continuum limit and reproduces CPW’s gapped/gapless dichotomy at the field-theory level. 
Placing the theory on \(\mathbb R^{2}\times S^{1}_{\beta}\),
\begin{equation}
  F_{L}(\beta,u)=-\frac{1}{\beta}\ln Z_{L}(\beta;u)
  =L^{2}\,f_{\mathrm{bulk}}\!\left[m_{\mathrm{eff}}^{2}(u)\right],\qquad
  m^2_{\rm eff}(u) \;=\; m^2 \;+\; \underbrace{g_{\rm halt}}_{=\;\vartheta(u)\,\Lambda^{\,p(L)-2}}\;+\;\Sigma_{\rm loop}(\lambda)\,.,
  \label{eqS:FreeEnergy}
\end{equation}
with \(\Sigma_{\rm{loop}}(\lambda)\sim c_{1}\lambda\Lambda+\cdots\) the one-loop mass shift from \eqref{eqS:delta_m2}. Because \(\vartheta(u)\) is Turing-undecidable, there is no algorithm that determines the sign of \(m_{\mathrm{eff}}^{2}(u)\) for a computable family \(u\mapsto H_{L}(u)\). 
Let \(\Phi\) be the bulk scalar dual to \(\mathcal O=\phi^{A}\phi^{A}\), with boundary value proportional to \(g_{\mathrm{halt}}\). Its stress tensor is
\begin{equation}
  T_{\mu\nu}(\Phi)
  =\partial_{\mu}\Phi\,\partial_{\nu}\Phi
   -\tfrac12 g_{\mu\nu}\bigl[(\partial\Phi)^{2}+m_{\Phi}^{2}\Phi^{2}\bigr].
  \label{eqS:Tmunu}
\end{equation}
Near the boundary in Fefferman-Graham gauge, \(\Phi(z,x)\sim g_{\mathrm{halt}}\,z^{\Delta_{-}}\) with \(\Delta_{-}=\frac12\bigl(3-\sqrt{9+4m_{\Phi}^{2}L^{2}}\bigr)\). {We define $\hat g \;\equiv\; g_{\rm halt}\,L^{2}$, hence near the boundary in FG gauge (\(\Delta_- = 1\))}
\begin{equation}
  \|T_{\mu\nu}(\Phi)\| \sim \hat g^{\,2}\,L^{-4}
  \label{eqS:T_scaling}
\end{equation}
The Einstein equations,
\begin{equation}
  R_{\mu\nu}-\tfrac12Rg_{\mu\nu}-\frac{3}{L^{2}}g_{\mu\nu}
  =8\pi G_{4}\,T_{\mu\nu},
\end{equation}
show that the small-backreaction condition \(\|8\pi G_4 T_{\mu\nu}\| \ll L^{-2}\) becomes, i.e.
\begin{equation}
 \hat g^{\,2} \ll \frac{L^{2}}{8\pi G_4}\ \propto\ N_c^{2}\quad
\Longleftrightarrow\quad g_{\rm halt} L^{2} \ll c\,N_c
  \label{eqS:backreaction_bound}
\end{equation} for some \(c=O(1)\) at fixed ’t Hooft coupling \cite{Maldacena:1997re,Aharony:1999ti}. Thus in the planar limit the scalar’s back-reaction is negligible and saddle selection is unaffected. Since \(g_{\mathrm{halt}}=\vartheta(u)\Lambda^{p(L)-2}=\mathcal O(1)\) (in Planck units) while \(N_{c}\to\infty\), the inequality holds parametrically. Near the boundary, the scalar profile and the large-$N_c$ back-reaction bound are illustrated in Fig.~\ref{figS:backreaction}. 

\begin{figure}[htb]
\centering
\begin{tikzpicture}[scale=1.0]
  \fill[blue!6] (0,0) rectangle (6,3.2);
  \draw[thick] (0,0) -- (6,0) node[right]{boundary $z{=}0$};
  \draw[thick] (0,3.2) -- (6,3.2) node[right]{$z\to\infty$};
  \draw[very thick, red!70!black, samples=80, domain=0.3:5.8]
    plot (\x, {2.8*exp(-0.55*\x)})
    node[right,yshift=8pt] {$\Phi(z)\sim g_{\rm halt}\, z^{\Delta_-}$};
  \node[align=left, draw, rounded corners, fill=white, anchor=west] at (1.4,2.2) {%
    $\displaystyle
    \begin{aligned}
      \|8\pi G_4\, T_{\mu\nu}(\Phi)\| &\ll L^{-2}\\
      \Rightarrow\quad g_{\rm halt}^{2} &\ll \frac{1}{L^{2}G_{4}}
        \sim \frac{N_c^{2}}{L^{2}}
    \end{aligned}$};
\end{tikzpicture}
\caption{Near-boundary decay of the bulk scalar and the large-\(N_c\) back-reaction bound \(g_{\rm halt}^2 \ll L^2/G_4 \;\sim\; N_c^2\).}
\label{figS:backreaction}
\end{figure}
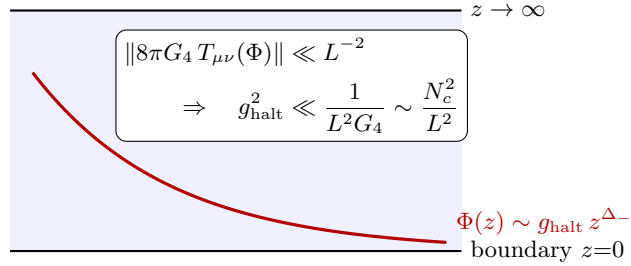

Thus in the planar limit the scalar’s back-reaction is negligible, and saddle selection between Poincaré \(\mathrm{AdS}_{4}\) and the \(\mathrm{AdS}_{4}\) soliton is unaffected.

We now analyse the adjoint-matrix sector in the large-\(N\) (planar) limit. Write \(N\equiv N_{c}\), rescale the gauge potential \(A_{\mu}\to gA_{\mu}\) so the Yang-Mills term reads \(\frac{N}{4}\Tr F_{\mu\nu}^{2}\). In double-line notation, the weight of a vacuum diagram \(G\) is \(\mathcal A_{G}\propto N^{\chi(G)}\lambda^{E}\), \(\lambda=g^{2}N\), with Euler index \(\chi=2-2h-b\) \cite{tHooft:1973alw,Witten1979}. Thus
\begin{equation}
  \ln Z(N,\lambda)=\sum_{h=0}^{\infty}N^{2-2h}F_{h}(\lambda),\qquad
  \langle\Tr M^{k}\Tr M^{\ell}\rangle
  =\langle\Tr M^{k}\rangle\langle\Tr M^{\ell}\rangle+\mathcal O(N^{-2}),
  \label{eqS:genus_expansion}
\end{equation}
so planar (\(h=0\)) diagrams dominate and factorisation holds \cite{Coleman1985}. 
’t Hooft double-line counting behind $ln\:Z=\sum_h N^{2-2h}F_h$ is depicted in Fig. S7 (planar vs non-planar).

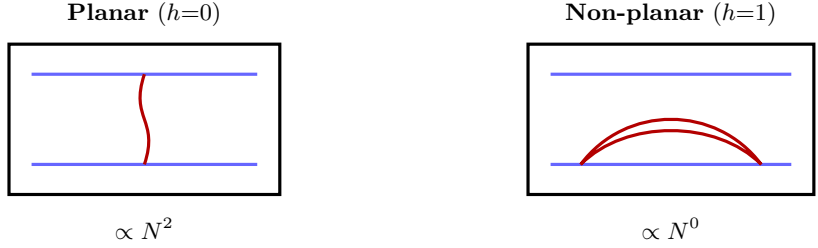
\begin{figure}[t]
\centering
\begin{tikzpicture}[scale=1.0]
  \node at (0,2.4) {\textbf{Planar} ($h{=}0$)};
  \draw[very thick] (-1.8,0) rectangle (1.8,2.0);
  \draw[very thick,blue!60] (-1.5,0.4) -- (1.5,0.4);
  \draw[very thick,blue!60] (-1.5,1.6) -- (1.5,1.6);
  \draw[very thick,red!70!black] (0,0.4) .. controls (0.2,1.0) and (-0.2,1.0) .. (0,1.6);
  \node[below] at (0,-0.2) {$\propto N^{2}$};
  \begin{scope}[xshift=7.0cm]
    \node at (0,2.4) {\textbf{Non-planar} ($h{=}1$)};
    \draw[very thick] (-1.9,0) rectangle (1.9,2.0);
    \draw[very thick,blue!60] (-1.6,0.4) -- (1.6,0.4);
    \draw[very thick,blue!60] (-1.6,1.6) -- (1.6,1.6);
    \draw[very thick,red!70!black]
      (-1.2,0.4) .. controls (-0.6,1.2) and (0.6,1.2) .. (1.2,0.4)
      .. controls (0.6,1.0) and (-0.6,1.0) .. cycle;
    \node[below] at (0,-0.2) {$\propto N^{0}$};
  \end{scope}
\end{tikzpicture}
\caption{Double-line (’t Hooft) counting: planar diagrams scale as \(N^{2}\), non-planar with one handle as \(N^{0}\).}
\label{figS:planar_nonplanar}
\end{figure}

Introduce an HS auxiliary \(\sigma\propto \Tr M^{2}\) to linearise quartic self-interactions in the adjoint scalar/matrix sector. The stationary-phase (gap) equation at leading order in \(1/N\) is
\begin{equation}
  -\partial^{2}\sigma + m^{2}\sigma +\lambda\sigma^{3}
  +\Pi_{1}^{\mathrm{planar}}\,\sigma=0,
  \label{eqS:sigma_eom}
\end{equation}
where \(\Pi_{1}^{\mathrm{planar}}\) is the planar one-loop tadpole. In \(d=3\) with dimensional regularisation and minimal subtraction,
\begin{align}
  \Pi_{1}^{\mathrm{planar}}
  &= \lambda\,(N^{2}-1)\int\!\frac{d^{3-\epsilon}k}{(2\pi)^{3-\epsilon}}\frac{1}{k^{2}+m^{2}}
   = \lambda\,(N^{2}-1)\,\frac{\mu^{\epsilon}}{(4\pi)^{\frac{3-\epsilon}{2}}}
     \Gamma\!\Bigl(\frac{\epsilon-1}{2}\Bigr)\,(m^{2})^{\frac{1-\epsilon}{2}}\nonumber\\
  &= \lambda\,(N^{2}-1)\biggl[
       \frac{c_{\Lambda}}{4\pi^{2}}\Lambda
       -\frac{m}{4\pi}
       +\mathcal O(\epsilon)\biggr],
  \label{eqS:planar_tadpole}
\end{align}
with scheme-dependent \(c_{\Lambda}\). Renormalising \(m^{2}\) absorbs the \(\Lambda\)-piece. The finite remainder defines the running mass
\begin{equation}
  m_{\mathrm{eff}}^{2}(\mu)=m^{2}+\Sigma_{1}(\lambda,\mu)+\cdots,\qquad
  \Sigma_{1}(\lambda,\mu)=\frac{\lambda\,(N^{2}-1)}{4\pi}\,(-m) + \cdots,
  \label{eqS:meff_largeN}
\end{equation}
where dots denote subleading \(1/N\) and higher-loop terms. (Any equivalent renormalisation scheme is acceptable; only the sign-stability matters below.) The effective potential in \(\overline{\mathrm{MS}}\) then reads
\begin{equation}
  \frac{V_{\mathrm{eff}}(\sigma)}{N}
  = \frac{m^{2}_{\mathrm{eff}}}{2}\sigma^{2}
    +\frac{\lambda}{4}\sigma^{4}
    +\mathcal O(N^{-2}).
  \label{eqS:Veff}
\end{equation}
For \(m^{2}_{\mathrm{eff}}>0\), the unique minimum is \(\sigma=0\) and the gap is \(\Delta=m_{\mathrm{eff}}\). For \(m^{2}_{\mathrm{eff}}=0\), the quartic stabilises the saddle, which becomes scale-invariant with anomalous dimensions governed by the Wilson-Fisher fixed point \cite{Bagnuls:2000ae}. Non-planar corrections are suppressed by \(N^{-2h}\) and cannot flip the sign of \(m_{\mathrm{eff}}^{2}\) beyond a fixed \(N\)-threshold. In AdS/CFT, the classical gravitational action acquires an overall \(N\!\propto\! L^{2}/G_{4}\) factor; \(1/N\) corrections correspond to string loops \(\sim g_{s}\sim \lambda/N\) \cite{Aharony:1999ti}. Hence the gap sign is stable in the topological expansion.

We compute the one-loop corrections to the Euclidean actions on the Poincaré \(\AdS_{4}\) saddle \(g_{\mathrm P}\) and on the \(\AdS_{4}\) soliton saddle \(g_{\text{Sol}}\). For a field \(\Phi\) of spin \(s\) and mass \(m_{\Phi}\) minimally coupled to \(g\),
\begin{equation}
  \Delta I^{(1)}_{\Phi}[g]
  =\frac{(-1)^{2s}}{2}\log\det\nolimits'\!\bigl[-\nabla_{g}^{2}+m_{\Phi}^{2}\bigr],
  \label{eqS:1loop}
\end{equation}
with zero modes removed. Using the Schwinger proper-time representation and \(\zeta\)-function regularisation,
\begin{equation}
  \log\det\nolimits'\!\bigl[-\nabla^{2}+m^{2}\bigr]
  = -\int_{0}^{\infty}\frac{ds}{s}\,\bigl[K(s)-\mathcal P\bigr]e^{-s m^{2}},
  \qquad
  K(s)=\Tr e^{-s(-\nabla^{2})},
  \label{eqS:heat}
\end{equation}
where \(\mathcal P\) projects onto zero modes. The early-time expansion is
\(K(s)=(4\pi s)^{-2}\sum_{n\ge 0} a_{n}s^{n}\); \(a_{0},a_{1}\) fix UV divergences \cite{Vassilevich:2003xt}. Since \(g_{\mathrm P}\) and \(g_{\text{Sol}}\) are asymptotically identical, counter terms renormalise them identically, only finite parts differ.

For a minimally coupled scalar on \(g_{\mathrm P}\), the spectrum of \(-\nabla^{2}\) is continuous. Following \cite{Camporesi:1994ga}, the heat kernel is
\begin{equation}
 K_P(s)
= \int \frac{d^3 p}{(2\pi)^3} \int_0^\infty dk\, \rho(k)\;
  e^{-s\left(k^2+|p|^2+\frac{9}{4L^2}\right)}
= \frac{\mathcal V_{\rm ren}(AdS_4)}{(4\pi s)^{2}}\,
  \exp\!\left(-\frac{9\,s}{4L^2}\right)
  \label{eqS:KP}
\end{equation}
with spectral density \(\rho(k)=L^{3}k^{2}/(4\pi^{2})\) and regulated boundary volume \(V_{\mathbb R^{3}}\) \footnote{
In \(D\) bulk dimensions the Schwinger kernel behaves as
\(K(s)\sim (4\pi s)^{-D/2}\sum_{n\ge 0} a_n s^n\). Our \(D=4\) case thus
gives the \((4\pi s)^{-2}\) scaling used above; the UV counterterms coincide
for the two asymptotically AdS geometries, so only finite parts differ.}
. Insert \eqref{eqS:KP} into \eqref{eqS:heat}. Subtract the \(a_{0},a_{1}\) UV pieces (holographic counterterms \cite{Skenderis2002}), the finite remainder is proportional to \((m^{2}L^{2}+2)\). For a conformally coupled scalar (\(m^{2}L^{2}=-2\)),
\begin{equation}
  \Delta I^{(1)}_{\mathrm P}=0,
  \label{eqS:DeltaIP_zero}
\end{equation}
consistent with \cite{Hartman:2006dy}. Thus scalars do not shift the Poincaré action at one loop in this case. We compare the two Euclidean saddles used throughout Poincaré AdS and the AdS$_4$ soliton in Fig.~\ref{figS:geometries}.

\begin{figure}[htb]
\centering
\begin{tikzpicture}[scale=1.0]
  \draw[thick] (0,0) rectangle (5,3.3);
  \node[above] at (2.5,3.5) {\textbf{(a) Poincar\'e AdS$_4$}};
  \draw[thick] (0,3.0) -- (5,3.0) node[right]{$z{=}0$ (boundary)};
  \foreach \x in {0.5,1.5,2.5,3.5,4.5}{
    \draw[blue!60] (\x,3.0) -- (\x,0.3);
  }
  \node at (2.5,-0.5) {$ds^{2}=\frac{L^{2}}{z^{2}}(dz^{2}+d\tau^{2}+dx^{2}+dy^{2})$};
  \begin{scope}[xshift=8.2cm]
    \draw[thick] (0,0) rectangle (5,3.3);
    \node[above] at (2.5,3.5) {\textbf{(b) AdS$_4$ soliton}};
    \draw[thick] (0.8,0.6) .. controls (1.6,0.2) and (3.4,0.2) .. (4.2,0.6);
    \draw[thick] (0.8,2.8) .. controls (2.0,3.2) and (3.0,3.2) .. (4.2,2.8);
    \draw[thick] (0.8,0.6) -- (0.8,2.8);
    \draw[thick] (4.2,0.6) -- (4.2,2.8);
    \draw[red!70!black,->] (5.3,1.7) -- (5.3,0.85) node[right] at (5.3,1.3) {$\theta$ shrinks};
    \node at (2.5,1.4) {$\tau\sim \tau+\beta,\quad r{\ge}r_{0}$};
    \node[below] at (2.5,-0.3) {smooth tip at $r{=}r_{0}$ fixes $\beta$};
  \end{scope}
\end{tikzpicture}
\caption{Two competing Euclidean saddles: (a) Poincar\'e AdS$_4$ with non-contractible $\tau$ and flat $\mathbb{R}_x\times\mathbb{R}_y$; 
(b) AdS$_4$ soliton where the  {spatial} circle $S^1_\theta$ shrinks smoothly at $r=r_0$, fixing the periodicity $\beta_\theta = 4\pi L^2/(3 r_0)$. 
The boundary topology is $S^1_\tau(\beta)\times\mathbb{R}_x\times S^1_\theta(L_\theta)$.}
\label{figS:geometries}
\end{figure}
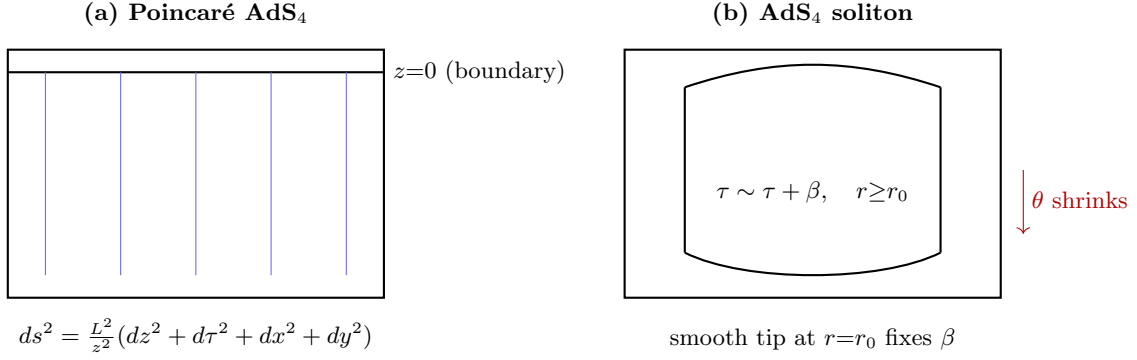


\begin{figure}[t]
\centering
\begin{tikzpicture}[scale=1.0]
  \draw[thick] (0,0) circle (2);
  \node at (0,2.5) {$S^{1}_{\beta}$};
  \foreach \ang/\lab in {0/$0$,30/$\beta$,60/$2\beta$,90/$3\beta$,120/$4\beta$,150/$5\beta$,180/$6\beta$,210/$7\beta$,240/$8\beta$,270/$9\beta$,300/$10\beta$,330/$11\beta$}{
    \fill (2*cos \ang,2*sin \ang) circle (1.2pt);
  }
  \draw[->,thick] (2,0) arc (0:60:2) node[midway, right] {images $n\beta$};
  \draw[->, thick] (3.0,0) -- (5.6,0) node[midway,above, yshift=2pt]{Poisson resummation};
  \begin{scope}[xshift=9.2cm]
    \draw[->] (-2.4,0) -- (2.8,0) node[below]{$\omega$};
    \foreach \n in {-4,-3,...,4}{
      \draw[very thick,blue!60] ({\n*0.6},0) -- ({\n*0.6},0.9);
      \node[below] at ({\n*0.6},0) {$\frac{2\pi \n}{\beta}$};
    }
    \node[above] at (0.5,1.2) {KK modes};
  \end{scope}
\end{tikzpicture}
\caption{Heat kernel on the shrinking circle via the method of images (winding along $S^1_\theta$) and its Poisson-resummed KK spectrum.}
\label{figS:images}
\end{figure}
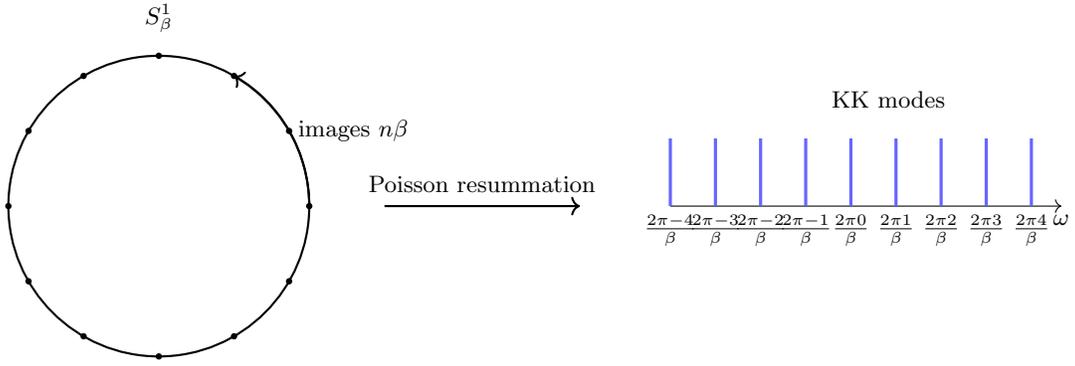

The soliton is a freely acting $\mathbb{Z}$--quotient along the spatial circle 
$\theta \sim \theta + n L_{\theta}$. The heat kernel then follows by the method of images.
\begin{equation}
K_{\rm Sol}(s) \;=\; K_P(s)\;+\; \frac{L^{3}V_x}{(4\pi s)^{2}}\,e^{-\frac{9s}{4L^2}}
\left[\frac{L_\theta}{\sqrt{4\pi s}}\sum_{n\neq 0}e^{-\frac{n^{2}L_\theta^{2}}{4s}}\right],
\label{eq:Ksol_theta_images}
\end{equation}
with \(V_{x}\) the (regulated) spatial volume along \(x\).   The method of images on $S_{\beta}^1$ and its Poisson-resummed KK spectrum are shown in Fig.~\ref{figS:backreaction}.
Poisson resumming the image sum along the shrinking circle $S^1_{\theta}$ yields
Eq.~\eqref{eq:Ksol_theta_images} and substituting into \eqref{eqS:heat} and integrating over \(s\) yields
\begin{equation}
\Delta I^{(1)}_\Phi[g_{\rm sol}]-\Delta I^{(1)}_\Phi[g_P]
\;=\;
-\,\frac{V_x\,\beta}{2\pi L_\theta}\sum_{m=1}^{\infty}\frac{m_{\rm eff}}{m}\,K_1\!\left(m\,m_{\rm eff}L_\theta\right)
\;=\;O\!\left(e^{-m_{\rm eff}L_\theta}\right),
\label{eq:new_1loop}
\end{equation}

where \(m_{\mathrm{eff}}^{2}=m^{2}+g_{\mathrm{halt}}+\Sigma_{\text{loop}}\) and \(K_{1}\) is a modified Bessel function. For \(\beta m_{\mathrm{eff}}\gg 1\), \(K_{1}(z)\sim \sqrt{\pi/(2z)}\,e^{-z}\).

Vectors decompose as \(A_{\mu}=A_{\mu}^{\perp}+\nabla_{\mu}\varphi\) with \(\nabla^{\mu}A_{\mu}^{\perp}=0\). Faddeev-Popov ghosts cancel the longitudinal sector, leaving a transverse determinant equivalent (up to multiplicity \(d_{\mathrm G}=N^{2}-1\)) to that of a scalar with an effective mass fixed by curvature \cite{Gibbons:1976ue,Christensen:1979iy}.

Metric fluctuations \(h_{\mu\nu}\) decompose into transverse-traceless tensors \(h^{\mathrm{TT}}_{\mu\nu}\), transverse vectors, and scalars. In de Donder gauge the tensor determinant reduces to a ratio of scalar-type determinants \cite{Avramidi:2000bm}. The UV divergences are governed by Seeley-DeWitt coefficients \(a_{0},a_{1}\) common to both saddles, differences arise first at \(a_{2}\), yielding a finite free-energy shift of order \(L^{3}/\beta^{3}\), suppressed against the classical \(L^{5}/(G_{4}\beta^{3})\) term when \(L^{2}\gg G_{4}\). So quantum corrections cannot reverse the classical sign of the free-energy difference between saddles when \(m_{\mathrm{eff}}^{2}>0\) and \(L^{2}\gg G_{4}\).

The finite-temperature CPW model maps to a \(3\)D Euclidean \(O(N)\) field theory with a relevant mass-like operator \(g_{\mathrm{halt}}\phi^{2}\) whose coupling is the halting bit dressed by the CPW amplifier. Under RG, \(\vartheta(u)=0\) flows to a critical (gapless) theory, while \(\vartheta(u)=1\) flows to a massive theory with \(m_{\mathrm{eff}}^{2}>0\). In the dual bulk description, back-reaction of the scalar \(\Phi\) is parametrically suppressed at large \(N_{c}\), and the free-energy difference between Poincaré \(\AdS_{4}\) and the \(\AdS_{4}\) soliton saddles is governed by the classical term with one-loop corrections subleading. Therefore the undecidable halting bit passes unaltered into the semiclassical gravitational sector, fixing the bulk saddle selection in an undecidable way, as stated in the main text.

\section{Large-\texorpdfstring{$N_c$}{Nc} Matrix Completion}
\label{app:largeN}

We now give a self-contained derivation of the large-$N_c$ matrix embedding, expanding all intermediate steps outlined in the main text.

We embed the $O(N)$ vector $\phi^A(x)$ into a traceless Hermitian matrix $M(x)\in\mathfrak{su}(N_c)$ so that gauge invariance and ’t~Hooft counting become manifest. The covariant derivative and field strength are
\begin{equation}
  D_\mu M=\partial_\mu M-i[A_\mu,M],\qquad
  F_{\mu\nu}=\partial_\mu A_\nu-\partial_\nu A_\mu-i[A_\mu,A_\nu].
\end{equation}
The gauge-invariant adjoint-matrix action is
\begin{equation}
  S_{\adj}
  =
  N_c\!\int d^3x\ \Tr\!\Big[(D_\mu M)^2+m^2 M^2+\lambda M^4+\frac{1}{4g^2}F_{\mu\nu}^2\Big]
  \ +\
  N_c\!\int d^3x\ g_{\mathrm{halt}}\Tr(M^2),
  \label{eqS:adjoint_action}
\end{equation}
with $g_{\mathrm{halt}}=\vartheta(u)\Lambda^{p(L)-2}$, $\vartheta(u)\in\{0,1\}$ is the halting bit, and $\Lambda$ is a UV cutoff of CPW provenance. The overall $N_c$ is chosen so that the ’t~Hooft coupling $\lambda =g ^2N_c$ is finite as $N_c\to\infty$.
We sketch the standard double-line counting in $d=3$ and record explicit factors. 
Write the interaction sector as a sum of single-trace vertices (e.g. $\Tr M^4$, $\Tr F^2$, $\Tr M^2 A^2$, $\Tr M [A_\mu,[A_\mu,M]]$, etc.). In the double-line (’t~Hooft) notation:
\begin{itemize}
  \item each propagator carries two colour lines and contributes a factor $\delta^{a}{}_{b}\delta^{c}{}_{d}$ which, under index contraction, yields one power of $N_c$ per closed index loop;
  \item each vertex contributes a power of the coupling; with the normalisation in \eqref{eqS:adjoint_action} a quartic $\Tr M^4$ vertex contributes $N_c\lambda$, while gauge vertices contribute $N_c(1/g ^2)$ times explicit $g $’s from rescaling $A_\mu\to g  A_\mu$ to canonical kinetic form.
\end{itemize}
For a connected vacuum ribbon graph $G$ with $V$ vertices, $E$ propagators and $F$ index loops (faces), the color factor is $N_c^{F}$ and the coupling factor is schematically $\lambda ^{E}$ after the $A_\mu\to gA_\mu$ rescaling. Using the Euler identity for a connected ribbon graph $F-E+V=\chi(G)=2-2h-b$, where $h$ is the genus and $b$ the number of boundaries (for pure vacuum $b=0$), we obtain
\begin{equation}
  \mathcal A_G \propto N_c^{F-E+V}\ \lambda ^{E}
  = N_c^{\,\chi(G)}\ \lambda ^{E}
  = N_c^{2-2h}\,\lambda ^{E}.
  \label{eqS:amplitude_scaling}
\end{equation}
Accordingly, the free energy admits the genus expansion
\begin{equation}
  \ln Z(N_c,\lambda )=\sum_{h=0}^{\infty}N_c^{2-2h}F_h(\lambda ),
  \label{eqS:genus}
\end{equation}
and $h=0$ (planar) diagrams dominate as $N_c\to\infty$. Factorisation of connected correlators follows immediately:
\begin{equation}
  \langle \Tr M^k\,\Tr M^\ell\rangle
  = \langle \Tr M^k\rangle\,\langle \Tr M^\ell\rangle + \cO(N_c^{-2}).
\end{equation}
At quadratic level in momentum space, the planar (colour-diagonal) propagator reads
\begin{equation}
  \langle M^{a}{}_{b}(p) M^{c}{}_{d}(-p)\rangle
  = \delta^{a}{}_{d}\,\delta^{c}{}_{b}\ \frac{1}{p^2+m^2 + \Pi(p)}\,,
\end{equation}
where the self-energy admits a genus expansion
\begin{equation}
  \Pi(p)=\sum_{h=0}^{\infty}N_c^{-2h}\ \Pi_h(p;\lambda ,\Lambda),\qquad
  \Pi_0=\text{planar},\ \ \Pi_{h\ge1}=\cO(N_c^{-2h}).
\end{equation}
We now compute the planar tadpole integral in $d=3$ explicitly. 
The one-loop mass correction from a quartic interaction is
\begin{align}
  \delta m^2_{\mathrm{planar}}
  &= C\,\lambda  \int_{\abs{\mathbf k}\le\Lambda}\!\frac{d^3 k}{(2\pi)^3}\ \frac{1}{k^2+m^2}\,,
  \label{eqS:tadpole0}
\end{align}
with a positive combinatorial factor $C$ (fixed by the vertex choice; its exact value is immaterial for the sign analysis). Evaluate
\begin{align}
  I_3(m,\Lambda)
  =\int_{\abs{\mathbf k}\le\Lambda}\!\frac{d^3 k}{(2\pi)^3}\ \frac{1}{k^2+m^2}
  &= \frac{1}{2\pi^2}\int_{0}^{\Lambda}\!dk\ \frac{k^2}{k^2+m^2}
   = \frac{1}{2\pi^2}\Big[k - m\arctan\!\Big(\frac{k}{m}\Big)\Big]_{0}^{\Lambda}\nonumber\\
  &= \frac{1}{2\pi^2}\Big(\Lambda - m\,\arctan\!\frac{\Lambda}{m}\Big)
   \underset{\Lambda\to\infty}{=}\ \frac{\Lambda}{2\pi^2}-\frac{m}{4\pi}+ \cO(\Lambda^{-1}).
  \label{eqS:tadpole_eval}
\end{align}
Hence
\begin{equation}
  \delta m^2_{\mathrm{planar}}
  = C\,\lambda \Big(\frac{\Lambda}{2\pi^2}-\frac{m}{4\pi}+\cdots\Big).
  \label{eqS:dm2_planar}
\end{equation}
After additive mass renormalisation (subtracting the linear divergence), the finite planar shift is
\begin{equation}
  \Sigma_{\rm{loop}}(\lambda ) = -\,C\,\frac{\lambda }{4\pi}\,m + \cO(\lambda ^2),
  \label{eqS:Sigma1}
\end{equation}
which is $\cO(1)$ (independent of $N_c$) at fixed $\lambda $. 
Here $\Sigma_{\rm{loop}}$ is the finite one-loop piece, scheme-dependent, evaluated e.g. in a sharp-cutoff scheme at scale $\mu\sim \Lambda$.

Collecting tree level, planar self-energy and the halting insertion,
\begin{equation}
  m_{\mathrm{eff}}^2
  \equiv m^2 + g_{\mathrm{halt}} + \Pi_0(0) + \sum_{h\ge1}N_c^{-2h}\Pi_h(0)
  = \underbrace{m^2 + g_{\mathrm{halt}} + \Sigma_{\rm{loop}}(\lambda )}_{\text{planar}}
    \ +\ \underbrace{\sum_{h\ge1}N_c^{-2h}\Pi_h(0)}_{\text{non-planar}}.
  \label{eqS:meff_def}
\end{equation}
Since $g_{\mathrm{halt}}=\vartheta(u)\mu_h^2$ with $\mu_h^2=\Lambda f[p(L)]$ (dimensionless $f$) is $\Theta(N_c^0)$ and positive when $\vartheta(u)=1$, and $\Pi_h=\cO(N_c^{-2h})$, the sign of $m_{\mathrm{eff}}^2$ is fixed at planar level for all sufficiently large $N_c$.

 {At fixed $\lambda $ and cutoff $\Lambda$, the planar $\delta m^2$ is $\cO(1)$; the tree halting insertion is $\Theta(N_c^0)$; non-planar corrections scale as $N_c^{-2h}$ and cannot overturn the planar sign.}
\smallskip

 {Proof.} Eq.~\eqref{eqS:dm2_planar} shows $\delta m^2_{\mathrm{planar}}=\cO(1)$ after renormalisation; Eq.~\eqref{eqS:amplitude_scaling} implies $\Pi_h=\cO(N_c^{-2h})$. Thus the series $\sum_{h\ge1}N_c^{-2h}\Pi_h(0)$ is uniformly suppressed for large $N_c$. Therefore the sign of $m_{\mathrm{eff}}^2$ in \eqref{eqS:meff_def} matches that of the planar bracket, completing the proof.\hfill$\square$ 

We recall the HS identity used to bosonise bilinears at the lattice/block level:
\begin{equation}
  \exp\!\Big(\frac{1}{2} J^A K_{AB} J^B\Big)
  = \int \frac{d^N\phi}{(2\pi)^{N/2}\sqrt{\det K^{-1}}}\,
      \exp\!\Big(-\frac{1}{2}\phi^A K_{AB}^{-1}\phi^B + J^A\phi^A\Big).
  \label{eqS:HS_identity}
\end{equation}
The HS field indices $A=1,\dots,N_v$ transform in the fundamental of $O(N_v)$; define the adjoint matrix
\begin{equation}
  M(x)=\sum_{A=1}^{N_v}\phi^{A}(x)\,T^A,\qquad
  T^A\in\mathfrak{su}(N_c),\qquad
  \Tr(T^AT^B)=\tfrac12\,\delta^{AB}.
  \label{eqS:adjoint_field}
\end{equation}
Choosing $N_v=N_c^2-1$ yields an isomorphism $\phi^A\mapsto M$ and a local redundancy $M\mapsto U M U^\dagger$ with $U\in\SU(N_c)$. Introducing $A_\mu$ promotes derivatives to $D_\mu M=\partial_\mu M-i[A_\mu,M]$, and the Yang-Mills term $\Tr F^2$ appears as the kinetic term of the emergent gauge links upon coarse graining (see the lattice derivation in Sec.~\ref{app:toyModel} below).  
Eq.~\eqref{eqS:genus} implies
\begin{equation}
  \langle \mathcal O_1\mathcal O_2\rangle
  = \langle \mathcal O_1\rangle\langle \mathcal O_2\rangle+\cO(N_c^{-2}),\qquad
  g_s\sim \frac{1}{N_c},\quad \text{(closed-string loop counting)},
\end{equation}
so the planar limit corresponds to classical bulk gravity in AdS/CFT. The quadratic planar propagator is
\begin{equation}
  G(p)=\frac{1}{p^2+m_{\mathrm{eff}}^2},\qquad p^2=p_1^2+p_2^2+p_3^2,
  \label{eqS:propagator}
\end{equation}
implying the finite-volume spectral gap on a box of side $L$ (PBC)
\begin{equation}
  \gamma_L^{(p\neq0)}(u)=\sqrt{(2\pi/L)^2+m_{\rm eff}^2(u)},\qquad
\Delta(u)=\lim_{L\to\infty}\gamma_L(u)=m_{\rm eff}(u)
  \label{eqS:gammaL}
\end{equation}
If $m_{\mathrm{eff}}^2(u)>0$, the uniform gap is $\Delta(u)=\lim_{L\to\infty}\gamma_L(u)=m_{\mathrm{eff}}(u)$, yielding exponential clustering with correlation length $\xi=\Delta^{-1}$. If $m_{\mathrm{eff}}^2(u)=0$, then $\gamma_L(u)\sim 2\pi/L$, and the IR is conformal with algebraic decay. Because $m_{\mathrm{eff}}^2(u)=m^2+\Sigma_{\rm{loop}}+\vartheta(u)\Lambda^{p(L)-2}$, the undecidable bit $\vartheta(u)$ determines the branch.

\section{A Concrete Ultraviolet Completion}
\label{app:toyModel}

We construct an explicit $\SU(N)$ lattice model whose continuum limit yields a single real adjoint scalar coupled to Yang-Mills in $d=3$. 
On a periodic cubic lattice $\Lambda=(\mathbb Z/L\mathbb Z)^3$ with spacing $a$ and volume $V=L^3$, place Wilson links $U_{x,\mu}\in \SU(N)$ on oriented links $(x,\mu)$ and Hermitian traceless adjoint Higgs variables $\Phi_x=\Phi_x^a T^a$ on sites, with $\tr(T^aT^b)=\tfrac12\delta^{ab}$. The Euclidean action is
\begin{align}
  S_{\mathrm{lat}}
  &=-\beta\sum_{x}\sum_{\mu<\nu}\Re\tr\,U_{x,\mu\nu}
    -\kappa\sum_{x,\mu}\tr\!\big(\Phi_x U_{x,\mu}\Phi_{x+\hat\mu}U_{x,\mu}^\dagger\big)
    +\sum_{x}\Big[m_0^2\,\tr\Phi_x^2+\lambda\big(\tr\Phi_x^2\big)^2\Big],
  \label{eqS:lataction}
\end{align}
with plaquette $U_{x,\mu\nu}=U_{x,\mu}U_{x+\hat\mu,\nu}U_{x+\hat\nu,\mu}^\dagger U_{x,\nu}^\dagger$. The path integral
\begin{equation}
  Z=\int [dU]\,[d\Phi]\ e^{-S_{\mathrm{lat}}}
\end{equation}
is invariant under local $\SU(N)$ transformations:
\begin{equation}
  U_{x,\mu}\mapsto \Omega_x U_{x,\mu}\Omega_{x+\hat\mu}^\dagger,\qquad
  \Phi_x\mapsto \Omega_x \Phi_x \Omega_x^\dagger,\qquad \Omega_x\in\SU(N).
\end{equation}
Bare parameters: $\beta=2N/g_0^2$, $\kappa$ (dimensionless hopping), and $m_0^2,\lambda$ (dimensionful in $d=3$). 
We expand $U_{x,\mu}=\exp\!\big(i a A_\mu(x+\tfrac{a}{2}\hat\mu)\big)$ and $\Phi_{x+\hat\mu}=\Phi_x+a\,\partial_\mu\Phi_x+\frac{a^2}{2}\partial_\mu^2\Phi_x+\cdots$.

\paragraph*{Gauge (plaquette) term.}
Using $U_{x,\mu\nu}=\exp\!\big(i a^2 F_{\mu\nu}(x)+\cO(a^3)\big)$ and $\Re\tr U_{x,\mu\nu}=N-\frac{a^4}{2}\tr F_{\mu\nu}^2+\cO(a^6)$,
\begin{align}
  -\beta\sum_{x}\sum_{\mu<\nu}\Re\tr U_{x,\mu\nu}
  &= -\beta N\,N_p\ +\ \frac{\beta a^4}{2}\sum_{x}\sum_{\mu<\nu}\tr F_{\mu\nu}^2 + \cO(a^6)\nonumber\\
  &= \text{const}\ +\ \frac{\beta a}{2}\int d^3x\ \tr F_{\mu\nu}^2 + \cO(a^3),
  \label{eqS:plaquette_cont}
\end{align}
since $\sum_x a^3\to\int d^3x$ and $\sum_{\mu<\nu}1=3$ produces the overall $a$ factor in $d{=}3$. Identifying $g_R^2=g_0^2 a$ gives $(2N/g_0^2)\cdot (a/2)=N/g_R^2$, hence
\begin{equation}
  S_{\mathrm{gauge}}^{\mathrm{cont}}
  = \frac{1}{2g_R^2}\int d^3x\ \tr F_{\mu\nu}^2,\qquad g_R^2=g_0^2 a.
  \label{eqS:gauge_cont}
\end{equation}
Expanding
\begin{align}
  \tr\!\big(\Phi_x U_{x,\mu}\Phi_{x+\hat\mu}U_{x,\mu}^\dagger\big)
  &= \tr\!\Big(\Phi_x\,[\Phi_x + a\,\partial_\mu\Phi_x + \tfrac{a^2}{2}\partial_\mu^2\Phi_x + \cdots]\Big)\nonumber\\
  &\quad +\, i a\,\tr\!\Big(\Phi_x [A_\mu,\Phi_x]\Big)
      + \frac{a^2}{2}\tr\!\Big(\Phi_x [A_\mu,[A_\mu,\Phi_x]]\Big)\nonumber\\
  &\quad +\, a^2\,\tr\!\Big(\partial_\mu\Phi_x [A_\mu,\Phi_x]\Big) + \cO(a^3).
\end{align}
The $ia\,\tr(\Phi[A_\mu,\Phi])= ia\,\tr([\Phi,\Phi]A_\mu)=0$. Summing over $x,\mu$ and converting sums to integrals yields
\begin{align}
  -\kappa\sum_{x,\mu}\tr(\Phi_x U_{x,\mu}\Phi_{x+\hat\mu}U_{x,\mu}^\dagger)
  &= -\kappa\sum_{x,\mu}\Big[\tr\Phi_x^2 + \frac{a^2}{2}\tr\big((\partial_\mu\Phi_x)^2 + \Phi_x [A_\mu,[A_\mu,\Phi_x]]\big)\Big] + \cO(a^3)\nonumber\\
  &= -\kappa\frac{3}{a^3}\int d^3x\ \tr\Phi^2\ -\ \frac{\kappa a}{2}\int d^3x\ \tr\big((\partial_\mu\Phi)^2 + \Phi [A_\mu,[A_\mu,\Phi]]\big) + \cdots.\nonumber
\end{align}
Recognizing $(D_\mu\Phi)^2=(\partial_\mu\Phi)^2+\Phi[A_\mu,[A_\mu,\Phi]]$, we obtain
\begin{equation}
  S_{\mathrm{hop}}^{\mathrm{cont}}
  = -\frac{\kappa a}{2}\int d^3x\ \tr (D_\mu\Phi)^2\ +\ \text{(mass shift)} + \cO(a^3).
  \label{eqS:hop_cont}
\end{equation}
Thus the continuum kinetic term is canonically normalised by defining $\Phi\to \Phi/\sqrt{\kappa a/2}$, and the $\kappa$-induced on-site piece renormalises the bare mass. 
\begin{equation}
  \sum_x m_0^2\,\tr\Phi_x^2 = \frac{1}{a^3}\int d^3x\ m_0^2\,\tr\Phi^2,\qquad
  \sum_x \lambda(\tr\Phi_x^2)^2 = \frac{1}{a^3}\int d^3x\ \lambda\, (\tr\Phi^2)^2.
\end{equation}
After the field rescaling from \eqref{eqS:hop_cont}, the continuum couplings are
\begin{equation}
  g_R^2=g_0^2 a,\qquad
  \lambda_R=\lambda\, a,\qquad
  m_R^2=\frac{m_0^2 - m_{0,c}^2(\beta,\kappa,\lambda)}{a^2},
  \label{eqS:ren_couplings}
\end{equation}
with $m_{0,c}^2$ the additive critical mass counterterm. Canonical dimensions in $d=3$: $[A_\mu]=\tfrac12$, $[\Phi]=\tfrac12$, $[g_R^2]=1$, $[\lambda_R]=1$, $[m_R^2]=2$. 
The theory is super-renormalisable in $d=3$. To one loop, $g_R^2$ and $\lambda_R$ receive at most finite multiplicative renormalisations (no logarithmic running), while the mass renormalises additively. Representative diagrams:

\begin{itemize}
  \item  {Scalar tadpole} from $\lambda_R(\tr\Phi^2)^2$:
  \begin{equation}
    \delta m_R^2\big|_{\lambda}
    = \frac{(N^2-1)+2}{N}\,\lambda_R\,I_3(m_R;\Lambda) \ =\ c_\lambda\,\lambda_R\Big(\frac{\Lambda}{2\pi^2}-\frac{m_R}{4\pi}+\cdots\Big),
  \end{equation}
  with $I_3$ as in \eqref{eqS:tadpole_eval} and $c_\lambda>0$ a group factor.

  \item  {Gauge tadpole} from $\tr(D_\mu\Phi)^2$ after expanding $D_\mu$:
  \begin{equation}
    \delta m_R^2\big|_{g}
    = c_g\,g_R^2\,I_3(0;\Lambda) = c_g\,\frac{g_R^2\,\Lambda}{2\pi^2} + \cO(g_R^2 m_R),
  \end{equation}
  (linear divergence; finite part scheme-dependent). Combining and renormalising,
  \begin{equation}
    \mu\frac{d m_R^2}{d\mu} = 2\,m_R^2 + \cO(g_R^2\lambda_R),
  \end{equation}
  consistent with dimensional analysis and the absence of logarithmic running in $d=3$ for super-renormalisable couplings.
\end{itemize}
Collecting \eqref{eqS:gauge_cont} and \eqref{eqS:hop_cont} with the rescaled field, the continuum Lagrangian is
\begin{equation}
  \mathcal L_{\mathrm{eff}}
  = \frac{1}{2g_R^2}\,\tr F_{\mu\nu}^2
    + \tr (D_\mu\Phi)^2
    + m_R^2\,\tr\Phi^2
    + \frac{\lambda_R}{N}\,(\tr\Phi^2)^2,
  \label{eqS:Leff}
\end{equation}
where the factor of $1/N$ in front of the quartic is the conventional large-$N$ choice that keeps $\lambda_R$ finite in the planar limit. Planar factorisation implies $\Phi\sim \sqrt{N}$, $A_\mu\sim \cO(1)$, so that $(\tr\Phi^2)^2\sim \cO(N^2)$ but is weighted by $\lambda_R/N=\cO(1)$, while higher multi-trace operators are $1/N$ suppressed. 
At $\kappa=0$ the model reduces to confining pure Yang-Mills in $d=3$. At $\beta=\infty$ the gauge sector freezes and one finds a second-order transition near $m_0^2=0$ in mean field. For generic $(\beta,\kappa,m_0^2,\lambda)$ there exists a continuous line $\kappa=\kappa_c(\beta,\lambda,m_0^2)$ where the scalar correlation length $\xi$ diverges and the long-distance theory matches \eqref{eqS:Leff}. On that line, finite-size scaling of susceptibilities and Binder cumulants approaches Gaussian/Wilson-Fisher values, as expected for a single relevant scalar deformation coupled to a free (super-renormalizable) gauge sector in $d=3$. 
From \eqref{eqS:Leff} the large-$N_c$ embedding \eqref{eqS:adjoint_action} follows upon identifying $\Phi\to M$ and restoring the overall $N_c$ prefactor (conventional in matrix models and gauge theories to implement ’t~Hooft scaling). The halting operator enters as a positive mass deformation $g_{\mathrm{halt}}\Tr(M^2)$; its undecidable sign persists through the continuum limit and into the planar saddle analysis in Sec.~III, 
where it fixes $m_{\mathrm{eff}}^2$ and hence the spectral gap~\eqref{eqS:gammaL}. Because non-planar corrections are $1/N_c^2$ suppressed, no resummation can change that sign at large but finite $N_c$, so the bulk saddle selection in the main text remains undecidable without solving the halting problem.

%


\section{Holographic Duality Setup}
\label{app:holography}

In the planar limit, single-trace sources for a three-dimensional adjoint matrix theory are encoded by a four-dimensional gravitational path integral via GKPW. Let $\mathcal O=\Tr M^{2}$ be the scalar single-trace operator and $J(x)$ its source. The generating functional is
\begin{equation}
  Z_{\text{QFT}}[J]
  =\Bigl\langle \exp\!\int d^{3}x\,J(x)\,\mathcal O(x)\Bigr\rangle_{\!N_{c}\to\infty}
  =\exp\!\Bigl[-S_{\text{grav}}^{\text{on-shell}}\bigl[\Phi\!\mid_{z=0}=J\bigr]\Bigr],
  \label{eqS:GKPW}
\end{equation}
where $\Phi$ is the bulk scalar dual to $\mathcal O$ and $z$ is the AdS radial coordinate ($z=0$ boundary). The renormalised bulk action reads
\begin{align}
  S_{\text{grav}}
  &=\frac{1}{16\pi G_{4}}\int_{\mathcal M}\!d^{4}x\,\sqrt{g}\,\Bigl(R+\frac{6}{L^{2}}\Bigr)
    +\frac{1}{8\pi G_{4}}\int_{\partial\mathcal M}\!d^{3}x\,\sqrt{\gamma}\,K
    +S_{\text{ct}}[\gamma],\label{eqS:Sgrav}\\
  S_{\text{ct}}[\gamma]
  &= -\frac{1}{8\pi G_{4}}\int_{\partial\mathcal M}\!d^{3}x\,\sqrt{\gamma}
     \left(\frac{2}{L}+\frac{L}{2}\,\mathcal R[\gamma]\right),
     \qquad (d{=}3\ \text{boundary, Euclidean}).\label{eqS:Sct}
\end{align}
Here $g_{\mu\nu}$ is the bulk metric, $\gamma_{ij}$ the induced boundary metric, $K$ the trace of the extrinsic curvature, and $\mathcal R[\gamma]$ the Ricci scalar of $\gamma$. In odd boundary dimension there is no conformal anomaly, and the counterterms \eqref{eqS:Sct} cancel all power divergences. Functional differentiation of $S_{\text{grav}}^{\text{on-shell}}$ with respect to $\gamma_{ij}^{(0)}$ yields the holographic (Brown-York) stress tensor with vanishing trace for flat boundary, in agreement with the absence of an anomaly in $d=3$.

 {Semiclassical control and large-$N$/$\lambda$.} Classical gravity is reliable when $L\gg\ell_{s}$ and $G_{4}/L^{2}\ll 1$, i.e. at large ’t Hooft coupling $\lambda_{\text{tH}}\gg 1$ (so $\alpha' \sim \ell_{s}^{2}/L^{2}\sim \lambda_{\text{tH}}^{-1/2}\ll 1$) and large color $N_{c}$ (so $G_{4}^{-1}\propto N_{c}^{2}\gg 1$). These suppress higher-derivative and string-loop corrections.

 {Scalar asymptotics and operator map.} A bulk scalar of mass $m_{\Phi}$ obeys $(\Box_{g}-m_{\Phi}^{2})\Phi=0$. In Fefferman-Graham gauge near $z=0$ one has
\begin{equation}
  \Phi(z,x)=z^{3-\Delta}\phi_{(0)}(x)+z^{\Delta}\phi_{(2\Delta-3)}(x)+\dots,\qquad
  m_{\Phi}^{2}L^{2}=\Delta(\Delta-3).
  \label{eqS:FG_scalar}
\end{equation}
The non-normalisable coefficient $\phi_{(0)}$ sources $\mathcal O$ (i.e. $J=\phi_{(0)}$), while the normalisable coefficient is proportional to its expectation value. To see this at the level of the on-shell variation, augment \eqref{eqS:Sgrav} by the scalar action
\begin{equation}
  S_{\Phi}=\frac{1}{2}\int_{\mathcal M}\!d^{4}x\,\sqrt{g}\,\bigl[(\nabla\Phi)^{2}+m_{\Phi}^{2}\Phi^{2}\bigr]
            +S_{\Phi,\text{ct}},
  \label{eqS:Sphi}
\end{equation}
whose variation on shell reduces to a boundary term
\begin{equation}
  \delta S_{\Phi}^{\text{on-shell}}=\frac{1}{2}\int_{z=\epsilon}\!d^{3}x\,\sqrt{\gamma}\,\Phi\,n^{z}\partial_{z}\delta\Phi
  =\int d^{3}x\,\Bigl[\mathcal C_{\Delta}\,\phi_{(2\Delta-3)}(x)\,\delta \phi_{(0)}(x)+\dots\Bigr],
  \label{eqS:deltaSphi}
\end{equation}
where $n^{z}$ is the unit outward normal, $\epsilon$ the cutoff, and $\mathcal C_{\Delta}$ a positive normalisation fixed after holographic renormalisation (the omitted terms cancel against $S_{\Phi,\text{ct}}$). Therefore,
\begin{equation}
  \expval{\mathcal O(x)}=\frac{\delta S_{\text{grav,ren}}^{\text{on-shell}}}{\delta J(x)}
  =\mathcal C_{\Delta}\,\phi_{(2\Delta-3)}(x),\qquad J(x)=\phi_{(0)}(x).
  \label{eqS:vev-GKPW}
\end{equation}

 {Critical and perturbed boundary conditions.} At criticality (gapless case), the effective mass $m_{\text{eff}}^{2}$ of the boundary theory vanishes. Setting $J=\phi_{(0)}=0$ produces the trivial scalar profile and the unique smooth solution with flat boundary is Poincaré AdS,
\begin{equation}
  ds^{2}=\frac{L^{2}}{z^{2}}\left(dz^{2}+\eta_{\mu\nu}dx^{\mu}dx^{\nu}\right),\qquad \eta_{\mu\nu}=\delta_{\mu\nu}\ \text{(Euclidean)},
  \label{eqS:AdSPoincare}
\end{equation}
with Ricci scalar $R=-12/L^{2}$. Turning on the halting perturbation $\vartheta(u)\,g_{\text{halt}}\,\Tr M^{2}$ corresponds to $J=\phi_{(0)}=\vartheta(u)\,g_{\text{halt}}$; for halting inputs $\vartheta(u)=1$ this relevant deformation pushes the theory into a massive phase with $m_{\text{eff}}^{2}>0$ and the back-reacted geometry is the Euclidean AdS$_{4}$ soliton with a contractible circle,
\begin{equation}
  ds^{2}
  =\frac{r^{2}}{L^{2}}\bigl(d\tau^{2}+dx^{2}+f(r)\,d\theta^{2}\bigr)
   +\frac{L^{2}}{r^{2}f(r)}\,dr^{2},\qquad
  f(r)=1-\Bigl(\frac{r_{0}}{r}\Bigr)^{3},
  \label{eqS:AdSSoliton}
\end{equation}
where smoothness at the tip $r=r_{0}$ requires the periodicity
\begin{equation}
  \theta\sim\theta+\beta_{\theta},\qquad \beta_{\theta}=\frac{4\pi L^{2}}{3r_{0}}.
  \label{eqS:periodicity}
\end{equation}

 {On-shell action for Poincaré AdS.} Place a cutoff at $z=\epsilon$ and evaluate $S_{\text{grav}}$ on \eqref{eqS:AdSPoincare}. The bulk term yields
\begin{align}
  S_{\text{bulk}}^{\mathrm P}
  &=\frac{1}{16\pi G_{4}}\int_{\epsilon}^{z_{\mathrm{IR}}}\!\!dz\int d^{3}x\,
    \frac{L^{4}}{z^{4}}\Bigl(-\frac{12}{L^{2}}+\frac{6}{L^{2}}\Bigr)
   =-\frac{6 L^{2}}{16\pi G_{4}}\int d^{3}x\int_{\epsilon}^{z_{\mathrm{IR}}}\!\frac{dz}{z^{4}}\nonumber\\
  &=-\frac{6 L^{2}}{16\pi G_{4}}\int d^{3}x\,\frac{1}{3}\,\epsilon^{-3}
   +\mathcal O(\epsilon^{0})
   =-\frac{L^{2}}{8\pi G_{4}}\int d^{3}x\,\epsilon^{-3}+\dots\,.
  \label{eqS:Pbulk}
\end{align}
The Gibbons-Hawking term uses $K=\gamma^{ij}\nabla_{i}n_{j}$ on $z=\epsilon$ with $n=\frac{z}{L}\,\partial_{z}$ and $\sqrt{\gamma}=\frac{L^{3}}{\epsilon^{3}}$:
\begin{equation}
  S^{\rm GH}_{P}=\frac{1}{8\pi G_4}\!\int d^3x\,\sqrt{\gamma}\,K
= +\,\frac{3L^{2}}{8\pi G_{4}}\!\int d^{3}x\,\varepsilon^{-3}
  \label{eqS:PGH}
\end{equation}
The counterterm \eqref{eqS:Sct} for flat boundary ($\mathcal R[\gamma]=0$) gives
\begin{equation}
  S_{\text{ct}}^{\mathrm P}
  =-\frac{1}{8\pi G_{4}}\int d^{3}x\,\frac{L^{3}}{\epsilon^{3}}\left(\frac{2}{L}\right)
  =-\frac{L^{2}}{4\pi G_{4}}\int d^{3}x\,\epsilon^{-3}.
  \label{eqS:Pct}
\end{equation}
Summing \eqref{eqS:Pbulk}-\eqref{eqS:Pct} cancels the power divergences exactly, leaving $I_{\mathrm P}\equiv S_{\text{grav,ren}}^{\mathrm P}=0$.

 {On-shell action for the AdS soliton.} Regulate at $r=R$ and compute each term.
The induced metric at $r=R$ has $\sqrt{\gamma}=\frac{R^{3}}{L^{3}}\sqrt{f(R)}$, with $f(R)=1-(r_{0}/R)^{3}$.
The outward unit normal is $n=\sqrt{\frac{r^{2}f(r)}{L^{2}}}\,\partial_{r}$.
A straightforward computation gives
\begin{align}
  K(R)&=\nabla_{\mu}n^{\mu}
  =\frac{1}{\sqrt{g}}\partial_{r}\bigl(\sqrt{g}\,n^{r}\bigr)
  =\frac{1}{\sqrt{g}}\partial_{r}\!\Bigl(\frac{r^{3}\sqrt{f(r)}}{L^{3}}\Bigr)
  =\frac{1}{L}\!\left[\,3\sqrt{f(R)}+\frac{R f'(R)}{2\sqrt{f(R)}}\,\right],\label{eqS:Ksol}\\
  \sqrt{g}&=\frac{R^{2}}{L^{2}},\qquad 
  f(R)=1-\frac{r_{0}^{3}}{R^{3}},\qquad 
  f'(R)=\frac{3r_{0}^{3}}{R^{4}}.
\end{align}
Expanding for large $R$,
\begin{equation}
  K(R)=\frac{3}{L}+\frac{3}{8L}\frac{r_{0}^{6}}{R^{6}}+O(R^{-9}),
  \label{eqS:Kexpand}
\end{equation}
so there is no $R^{-3}$ term in $K$; the first correction is $O(R^{-6})$.

The bulk integrand $R+6/L^{2}$ for \eqref{eqS:AdSSoliton} equals $-6/L^{2}$ (Einstein space with $\Lambda=-3/L^{2}$), so
\begin{align}
S_{\rm bulk}^{\rm Sol}(R)
&=\frac{1}{16\pi G_4}\int d^4x\,\sqrt{g}\,(R+\tfrac{6}{L^2}) \nonumber \\
&=\frac{1}{16\pi G_4}\,\Big(-\tfrac{6}{L^2}\Big)\,
\beta V_x \beta_\theta \int_{r_0}^{R} dr\,\frac{r^2}{L^2} \nonumber \\
&=-\,\frac{\beta V_x \beta_\theta}{8\pi G_4 L^4}\,(R^3-r_0^3)
  \label{eqS:Solbulk}
\end{align}
Using $\sqrt{\gamma}=\frac{R^{3}}{L^{3}}\sqrt{f(R)}
=\frac{R^{3}}{L^{3}}\!\left(1-\tfrac12\frac{r_{0}^{3}}{R^{3}}+O(R^{-6})\right)$
and the corrected $K(R)$ from \eqref{eqS:Kexpand}, we get
\begin{align}
  S_{\rm GH}^{\rm Sol}(R)
&=\frac{1}{8\pi G_4}\int_{r=R} d^3x\,\sqrt{\gamma}\,K \nonumber \\
&=\frac{\beta V_x \beta_\theta}{8\pi G_4}\,
\frac{R^3}{L^4}\!\left[\,3-\frac{3}{2}\frac{r_0^3}{R^3}
+O\!\left(\frac{r_0^6}{R^6}\right)\right] \nonumber \\
&=\frac{\beta V_x \beta_\theta}{8\pi G_4}\!
\left(\frac{3R^3}{L^4}-\frac{3}{2}\frac{r_0^3}{L^4}+O(R^{-3})\right).
\label{eqS:SolGH}
\end{align}
The counterterm \eqref{eqS:Sct} at $r=R$ with $\mathcal R[\gamma]=0$ gives
\begin{align}
 S_{\rm ct}^{\rm Sol}(R)
&=-\frac{1}{8\pi G_4}\int_{r=R} d^3x\,\sqrt{\gamma}\,\frac{2}{L} \nonumber \\
&=-\frac{\beta V_x \beta_\theta}{8\pi G_4}\!
\left(\frac{2R^3}{L^4}-\frac{r_0^3}{L^4}+O(R^{-3})\right)
  \label{eqS:Solct}
\end{align}
Adding \eqref{eqS:Solbulk}-\eqref{eqS:Solct} and sending $R\to\infty$, all $R^{4}$ and $R^{3}$ divergences cancel. The finite remainder is
\begin{align}
  I_{\rm Sol}
&=\lim_{R\to\infty}\Big[S_{\rm bulk}^{\rm Sol}+S_{\rm GH}^{\rm Sol}+S_{\rm ct}^{\rm Sol}\Big] \nonumber \\
&=-\,\beta V_x \beta_\theta\,\frac{r_0^3}{16\pi G_4 L^4}
  \label{eqS:Isol_intermediate}
\end{align}
Now use the regularity condition \eqref{eqS:periodicity}, $\beta_{\theta}=\frac{4\pi L^{2}}{3r_{0}}$, to eliminate $\beta_{\theta}$ and express everything in terms of $r_{0}$:
\begin{align}
I_{\rm Sol}
&=-\,\beta V_x\,\frac{r_0^2}{12\,G_4 L^2}, \qquad
\frac{F_{\rm sol}}{V_x}=\frac{I_{\rm Sol}}{\beta V_x}
=-\frac{4\pi^2L^2}{27\,G_4}\,\frac{1}{L_\theta^2}
  \label{eqS:Isol_r0}
\end{align}
Finally trade $r_{0}$ for the thermal circumference $\beta$ of the contractible circle in the soliton background. In Euclidean signature, the smoothness condition that fixes $\beta_{\theta}$ is equivalent to a thermal identification $\tau\sim\tau+\beta$ on the boundary with
\begin{equation}
  r_{0}=\frac{4\pi L^{2}}{3L_{\theta}}.
  \label{eqS:r0beta}
\end{equation}

 {Brown-York tensor and boundary interpretation.} The renormalised stress tensor is
\begin{equation}
  T_{ij}=\frac{1}{8\pi G_{4}}\left(K_{ij}-K\gamma_{ij}-\frac{2}{L}\gamma_{ij}
          -\frac{L}{2}\bigl(\mathcal R_{ij}[\gamma]-\tfrac12\mathcal R[\gamma]\gamma_{ij}\bigr)\right).
  \label{eqS:BYtensor}
\end{equation}
For flat boundary $\mathcal R[\gamma]=0$. In Poincaré AdS, $K_{ij}=\frac{1}{L}\gamma_{ij}$ at $z=\epsilon$, hence $T_{ij}=0$ after renormalisation, matching $\expval{T^{i}{}_{i}}=0$ in a gapless CFT on $S^{1}_{\beta}\times\mathbb R^{2}$. For the soliton, using \eqref{eqS:Kexpand} and the large-$R$ expansion of the induced metric one finds
\begin{equation}
  \expval{T_{\tau\tau}}_{\text{sol}}=-\frac{r_{0}^{3}}{16\pi G_{4}L^{4}},\qquad
  \expval{T_{xx}}_{\text{sol}}=\expval{T_{yy}}_{\text{sol}}=+\frac{1}{2}\frac{r_{0}^{3}}{16\pi G_{4}L^{4}},
  \label{eqS:BYcomponents}
\end{equation}
so that the trace vanishes and the energy density is strictly negative. This equals the thermal Casimir energy density of a gapped boundary theory to leading order in $G_{4}$.

 {Boundary Casimir energy and matching:} For a free massive scalar of mass $m_{*}$ on $S^{1}_{\beta}\times\mathbb R^{2}$ with periodic boundary conditions, the zeta-regularised energy density is
\begin{equation}
  \varepsilon_{\text{Cas}}(\beta;m_{*})
  =-\frac{m_{*}^{2}}{2\pi^{2}\beta}\sum_{n=1}^{\infty}\frac{K_{1}(m_{*}n\beta)}{n},
  \label{eqS:Casimir}
\end{equation}
with $K_{1}$ the modified Bessel function. For $m_{*}\beta\gg 1$, $K_{1}(z)\sim \sqrt{\pi/(2z)}\,e^{-z}$, so $\varepsilon_{\text{Cas}}$ is exponentially small, reflecting insensitivity to large tori; for $m_{*}\to 0$, $\varepsilon_{\text{Cas}}\to -\pi^{2}/(90\beta^{3})$. Identifying $m_{*}\sim m_{\text{eff}}$ from the IR effective theory and using \eqref{eqS:r0beta}, \eqref{eqS:BYcomponents} reproduces the sign and scaling of~\eqref{eqS:Isol_r0}.

A double-trace term $\int f\,\mathcal O^{2}$ corresponds to mixed boundary conditions for $\Phi$. In the standard quantisation window, write the near-boundary data as $\Phi(z,x)=z^{3-\Delta}\alpha(x)+z^{\Delta}\beta(x)+\dots$; adding $f\,\mathcal O^{2}$ imposes the mixed condition $\beta=\kappa\,\alpha$ with $\kappa\propto f$. In the present construction the undecidable halting coupling plays this rôle: $g_{\text{halt}}$ fixes the sign of $\kappa$ and therefore whether the solution remains Poincaré AdS (critical branch, $\alpha=0$) or relaxes to the soliton (massive branch, $\alpha\neq 0$), thus transferring the Turing-undecidable predicate $\vartheta(u)$ into an undecidable selection between bulk saddles.

Matching stress-tensor two-point functions fixes
\begin{equation}
  \frac{L^{d-1}}{G_{d+1}}
  \;=\;
  \frac{2\,\pi^{d/2}}{\Gamma\!\left(\frac{d}{2}\right)}\,N^{2}
  \quad\Longrightarrow\quad
  \frac{L^{2}}{G_{4}}
  \;=\;
  \frac{2\,\pi^{3/2}}{\Gamma\!\left(\frac{3}{2}\right)}\,N^{2}
  \;=\;
  4\pi\,N^{2}
  \quad (d=3).
  \label{eqS:L2G4}
\end{equation}

Hence all one-point and free-energy densities scale as $N^{2}$, while quantum gravity corrections are $1/N^{2}$ suppressed and $\alpha'$ corrections are $\mathcal O(\lambda_{\text{tH}}^{-1/2})$. Provided the curvature scale satisfies $L\gg \ell_{s}$ and $N_{c}\gg 1$, the sign of $\Delta F$ in \eqref{eq:new_final_DF} is protected against both, ensuring that the undecidable map
\begin{equation}
  \vartheta(u)=0\ \Rightarrow\ \text{Poincar\'e AdS}_{4}\ \text{(gapless)},\qquad
  \vartheta(u)=1\ \Rightarrow\ \text{AdS}_{4}\ \text{soliton (gapped)},
  \label{eqS:map}
\end{equation}
remains valid in the planar, strong-coupling regime relevant for holography.


\section{Undecidability Theorem (Holographic Form)}
\label{app:undecidable}

The Cubitt-Pérez-García-Wolf (CPW) construction provides a Turing-computable map $\mathcal C:u\mapsto H(u)$ from a binary word $u\in\{0,1\}^{*}$ to a finite-range, translation-invariant spin Hamiltonian on $\mathbb Z^{2}$ such that the universal Turing machine $U$ halts on $u$ if and only if the finite-volume spectral gaps $\gamma_{L}(u)$ satisfy $\inf_{L}\gamma_{L}(u)>0$; if $U$ does not halt then $\liminf_{L\to\infty}\gamma_{L}(u)=0$ \cite{Cubitt:2015xsa,Cubitt:2015xsa}. The uplift reviewed in the main text is as follows and we now render each step algebraically explicit to make the present statement self-contained. First one embeds $H(u)$ into a three-dimensional adjoint-matrix quantum field theory whose infrared (IR) mass parameter depends on $u$ through a relevant coupling. Then, in the planar limit $N_{c}\to\infty$ at strong ’t Hooft coupling, the GKPW prescription equates the generating functional with the exponential of a four-dimensional renormalised on-shell gravitational action. For the Euclidean boundary $S^{1}_{\beta}\times\mathbb R^{2}$ there exist exactly two smooth, static, translation-invariant fill-ins obeying the same asymptotics: Poincaré $\mathrm{AdS}_{4}$, denoted $g_{\mathrm P}$, and the $\mathrm{AdS}_{4}$ soliton, denoted $g_{\mathrm{Sol}}$; we will evaluate their renormalised actions and compare them.

The effective IR mass in the $3$D field theory can be written as
\begin{equation}
  m_{\mathrm{eff}}^{2}(u)=m^{2}+\Sigma_{\text{loop}}+\vartheta(u)\,\mu_h^2,
  \qquad
  \vartheta(u)=\begin{cases}1&\text{$U$ halts on $u$,}\\0&\text{otherwise,}\end{cases}
  \label{eqS:meff_def1}
\end{equation}
where $\Lambda$ is a Wilsonian cutoff and $p(L)$ is the degree of the CPW energy-
amplification polynomial satisfying $p(L)\ge L^{10}$, so that the halting term is a positive, rapidly growing function of system size (the precise exponent is immaterial; only positivity and growth rate are used). One can and we henceforth do fine-tune the microscopic bare mass $m^{2}$ to cancel the loop shift $\Sigma_{\text{loop}}$ in the non-halting branch so that $\mathrm{sgn}\,m_{\mathrm{eff}}^{2}(u)=\vartheta(u)$ holds identically. Thus, if $U$ does not halt, then $m_{\mathrm{eff}}^{2}(u)=0$; if $U$ halts, then $m_{\mathrm{eff}}^{2}(u)>0$.

We now compute the renormalised on-shell gravitational actions on both saddles. The Euclidean Poincaré metric is
\begin{equation}
  g_{\mathrm P}:\qquad ds^{2}=\frac{L^{2}}{z^{2}}\left(dz^{2}+d\tau^{2}+dx^{2}+dy^{2}\right),\qquad 0<z<\infty,
  \label{eqS:Pmetric}
\end{equation}
while the Euclidean $\mathrm{AdS}_{4}$ soliton metric is
\begin{equation}
  g_{\mathrm{Sol}}:\qquad
  ds^{2}=\frac{r^{2}}{L^{2}}\left(d\tau^{2}+dx^{2}+f(r)\,d\theta^{2}\right)+\frac{L^{2}}{r^{2}f(r)}\,dr^{2},
  \qquad f(r)=1-\left(\frac{r_{0}}{r}\right)^{3},
  \label{eqS:Solmetric}
\end{equation}
with smoothness at $r=r_{0}$ requiring the identification
\begin{equation}
  \theta\sim\theta+\beta_{\theta},\qquad \beta_{\theta}=\frac{4\pi L^{2}}{3r_{0}}.
  \label{eqS:theta_period}
\end{equation}
The boundary cylinder has Euclidean time circumference $\beta$, giving $\int d\tau\,dx=\beta V_{x}$. In the soliton, the contractible circle has circumference $\beta_\theta$. Smoothness at the tip fixes $\beta_\theta=4 \pi L^2/(3 r_0$. Volumes appearing in the bulk action are then $\beta V_{x} \beta_\theta$ and $r_0$ will be expressed in terms of $\beta$.

The renormalised action is $S_{\text{grav,ren}}=S_{\text{bulk}}+S_{\text{GH}}+S_{\text{ct}}$ with
\begin{equation}
  S_{\text{bulk}}=\frac{1}{16\pi G_{4}}\int_{\mathcal M}\!d^{4}x\sqrt{g}\,\left(R+\frac{6}{L^{2}}\right),\quad
  S_{\text{GH}}=\frac{1}{8\pi G_{4}}\int_{\partial\mathcal M}\!d^{3}x\sqrt{\gamma}\,K,\quad
  S_{\text{ct}}=-\frac{1}{8\pi G_{4}}\int_{\partial\mathcal M}\!d^{3}x\sqrt{\gamma}\left(\frac{2}{L}+\frac{L}{2}\mathcal R[\gamma]\right),
  \label{eqS:actions}
\end{equation}
where $\gamma$ is the induced boundary metric, $K$ the trace of the extrinsic curvature, and $\mathcal R[\gamma]$ the Ricci scalar of $\gamma$ (the last term vanishes for flat boundary). For $g_{\mathrm P}$ with a regulator surface at $z=\epsilon$, one has $\sqrt{g}=L^{4}z^{-4}$, $R=-12/L^{2}$, $\sqrt{\gamma}=L^{3}\epsilon^{-3}$, outward unit normal $n^{z}=z/L$, and $K=-3/L$. Substituting,
\begin{align}
  S_{\text{bulk}}^{\mathrm P}
  &=\frac{1}{16\pi G_{4}}\int d^{3}x\int_{\epsilon}^{z_{\mathrm{IR}}}\!dz\;\frac{L^{4}}{z^{4}}\left(-\frac{12}{L^{2}}+\frac{6}{L^{2}}\right)
   =-\frac{6L^{2}}{16\pi G_{4}}\int d^{3}x\int_{\epsilon}^{z_{\mathrm{IR}}}\!\frac{dz}{z^{4}}
   =-\frac{L^{2}}{8\pi G_{4}}\int d^{3}x\,\epsilon^{-3}+\mathcal O(\epsilon^{0}),\\
  S^{\rm GH}_{P}
&=\frac{1}{8\pi G_4}\!\int d^{3}x\,\sqrt{\gamma}\,K
= +\,\frac{3L^{2}}{8\pi G_{4}}\!\int d^{3}x\,\varepsilon^{-3}\\
  S_{\text{ct}}^{\mathrm P}
  &=-\frac{1}{8\pi G_{4}}\int d^{3}x\,\frac{L^{3}}{\epsilon^{3}}\left(\frac{2}{L}\right)
   =-\frac{L^{2}}{4\pi G_{4}}\int d^{3}x\,\epsilon^{-3}.
\end{align}
The sum $S_{\text{bulk}}^{\mathrm P}+S_{\text{GH}}^{\mathrm P}+S_{\text{ct}}^{\mathrm P}$ cancels exactly at order $\epsilon^{-3}$, leaving $I_{\mathrm P}\equiv S_{\text{grav,ren}}^{\mathrm P}=0$.

For $g_{\mathrm{Sol}}$ with a regulator surface at $r=R$ one has $\sqrt{g}=r^{3}/L^{3}$ and again $R=-12/L^{2}$. The bulk action is
\begin{align}
 S_{\rm bulk}^{\rm Sol}(R)
&=\frac{1}{16\pi G_4}\int d^4x\,\sqrt{g}\,(R+\tfrac{6}{L^2})
=\frac{1}{16\pi G_4}\,\Big(-\tfrac{6}{L^2}\Big)\,
\beta V_x \beta_\theta \!\int_{r_0}^{R}\!dr\,\frac{r^2}{L^2} \nonumber\\
&=-\,\frac{\beta V_x \beta_\theta}{8\pi G_4 L^4}\,(R^3-r_0^3)\label{eqS:Sol_bulk}
\end{align}
The induced metric at $r=R$ has $\sqrt{\gamma}=\frac{R^{3}}{L^{3}}\sqrt{f(R)}=\frac{R^{3}}{L^{3}}\left(1-\frac{1}{2}(r_{0}/R)^{3}+\mathcal O(R^{-6})\right)$. The outward unit normal is $n^{r}=\sqrt{r^{2}f(r)/L^{2}}$, and a direct computation gives
\begin{equation}
  K(R)=\frac{1}{L}\!\left[\,3\sqrt{f(R)}+\frac{R f'(R)}{2\sqrt{f(R)}}\,\right]
  = \frac{3}{L} + \frac{3}{8L}\frac{r_0^{6}}{R^{6}} + O(R^{-9}),
  \qquad f(R)=1-\frac{r_0^3}{R^3},\quad f'(R)=\frac{3 r_0^3}{R^4}.
\label{eqS:Sol_K}
\end{equation}

\begin{align}
  S_{\rm GH}^{\rm Sol}(R)
  &=\frac{1}{8\pi G_4}\!\int_{r=R}\!d^3x\,\sqrt{\gamma}\,K \nonumber\\
  &=\frac{\beta V_x \beta_\theta}{8\pi G_4}\,
    \frac{R^3}{L^4}\!\left[\,3-\frac{3}{2}\frac{r_0^3}{R^3}
      +O\!\left(\frac{r_0^6}{R^6}\right)\right] \nonumber\\
  &=\frac{\beta V_x \beta_\theta}{8\pi G_4}\!
    \left(\frac{3R^3}{L^4}-\frac{3}{2}\frac{r_0^3}{L^4}+O(R^{-3})\right)
\label{eqS:Sol_GH}\\
 S_{\rm ct}^{\rm Sol}(R)
&=-\frac{1}{8\pi G_4}\!\int_{r=R}\!d^3x\,\sqrt{\gamma}\,\frac{2}{L}
=-\frac{\beta V_x \beta_\theta}{8\pi G_4}\!
\left(\frac{2R^3}{L^4}-\frac{r_0^3}{L^4}+O(R^{-3})\right)\label{eqS:Sol_ct}
\end{align}

Adding \eqref{eqS:Sol_bulk}, \eqref{eqS:Sol_GH}, and \eqref{eqS:Sol_ct}, the $R^{4}$ and $R^{3}$ divergences cancel and the finite part is
\begin{equation}
 I_{\rm Sol}
=\lim_{R\to\infty}\Big[S_{\rm bulk}^{\rm Sol}+S_{\rm GH}^{\rm Sol}+S_{\rm ct}^{\rm Sol}\Big]
=-\,\beta V_x \beta_\theta\,\frac{r_0^3}{16\pi G_4 L^4}\quad \text{with }\beta_\theta =\frac{4\pi L^2}{3r_0}:\qquad\label{eqS:Sol_intermediate}
\end{equation}
Using the regularity condition \eqref{eqS:theta_period} to eliminate $\beta_{\theta}$ in favour of $r_{0}$,
\begin{equation}
I_{\rm Sol}=-\,\beta V_x\,\frac{r_0^2}{12\,G_4 L^2},\quad
\frac{F_{\rm sol}}{V_x}=\frac{I_{\rm Sol}}{\beta V_x}
=-\frac{4\pi^2L^2}{27\,G_4}\,\frac{1}{L_\theta^2}. 
\label{eq:new_Isol_formula}
\end{equation}
Using the bound $K_{1}(z)\le \sqrt{\frac{\pi}{2z}}\,e^{-z}$ for $z>0$,
\begin{equation}
\bigl|\Delta I^{(1)}_{\Phi}(u)\bigr|
\;\le\;
\frac{V_x\,\beta}{2\pi\,L_\theta}
\sum_{m=1}^{\infty}\frac{m_{\rm eff}(u)}{m}\,
K_{1}\!\bigl(m\,m_{\rm eff}(u)\,L_\theta\bigr)
\;=\;
O\!\bigl(e^{-\,m_{\rm eff}(u)\,L_\theta}\bigr).
\label{eq:one_loo-bound}
\end{equation}

which is exponentially small for $\beta m_{\mathrm{eff}}(u)\gg1$. Vector and graviton sectors reduce to scalar-type determinants on Einstein backgrounds and yield contributions of the same order or polynomially smaller in $\beta^{-1}$; hence for fixed $L^{2}\gg G_{4}$ the total one-loop shift obeys
\begin{equation}
\Delta I^{(1)}_{\rm tot}(u)
\;=\;
O\!\bigl(e^{-\,m_{\rm eff}(u)\,L_\theta}\bigr)
\;+\;
O\!\bigl(L_\theta^{-2}\bigr),
\label{eqS:one_loop_total}
\end{equation}
and cannot alter the sign fixed by the classical difference~\eqref{eq:new_Isol_formula} 
when $m_{\mathrm{eff}}^{2}(u)\ge 0$ and $\beta$ is large.

Combining \eqref{eqS:meff_def1}, \eqref{eq:new_Isol_formula}, and \eqref{eqS:one_loop_total} yields the precise version of the statement quoted in the main text:
\begin{equation}
\frac{\Delta F(u)}{V_x}
=
\begin{cases}
0, & \vartheta(u)=0 \quad (\text{critical; Poincar\'e AdS}_4),\\
-\dfrac{4\pi^2 L^2}{27\,G_4\,L_\theta^2}\;+\;O\!\big(e^{-m_{\rm eff}(u)\,L_\theta}\big), & \vartheta(u)=1 \quad (\text{gapped; AdS}_4\text{ soliton}),
\end{cases}
\label{eq:new_final_DF}
\end{equation}

\begin{theoremS}[Holographic undecidability, conditional form]
\label{thmS:main} Assume: (A1) The CPW map $\mathcal C$ produces finite-range, translation-invariant interactions with the halting predicate encoded as in \eqref{eqS:meff_def1} and with $p(L)\ge L^{10}$. (A2) The holographic regime satisfies $L\gg \ell_{s}$ and $G_{4}/L^{2}\ll 1$ so that $\alpha'$ and string-loop corrections are parametrically suppressed. (A3) In the low-temperature window $\beta/L_\theta\gg 1$, the relevant smooth, static, translation-invariant Einstein competitors are $g_P$ and $g_{\rm Sol}$. The Euclidean AdS black brane exists but is subdominant in this regime; it dominates above $T_c\sim 1/L_\theta$ Sec.\eqref{app:holography}. Then there is no algorithm which, given the local interaction table of $H(u)$, decides whether the dominant Euclidean saddle is Poincaré $\mathrm{AdS}_{4}$ or the $\mathrm{AdS}_{4}$ soliton.
\end{theoremS}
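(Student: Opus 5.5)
The proof will be a many-one reduction from the halting problem, arguing by contradiction. Suppose an algorithm $\mathcal A$ exists which, given the local interaction table of $H(u)$, outputs ``Poincar\'e'' or ``soliton'' according to which Euclidean saddle dominates. Precompose $\mathcal A$ with the computable CPW map $\mathcal C:u\mapsto H(u)$ from Theorem~1 and assumption (A1). The result is an algorithm on binary words $u$. The task is to show that this composite decides $\vartheta(u)$, which contradicts Turing's theorem. Everything therefore reduces to establishing a \emph{computable, total} correspondence
\[
\vartheta(u)=0 \iff g_{\mathrm P}\ \text{dominates},\qquad \vartheta(u)=1\iff g_{\mathrm{Sol}}\ \text{dominates},
\]
which is exactly the map \eqref{eqS:map}.

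I would establish this correspondence in three steps, in order.

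\emph{Step 1 (boundary side).} Using Lemmas~1--2 and the field-theoretic uplift of Secs.~II--III, I would fix the bare mass so that $m^2+\Sigma_{\rm loop}=0$. Then \eqref{eqS:meff_def1} gives $\mathrm{sgn}\,m_{\rm eff}^2(u)=\vartheta(u)$. The sign is stable under non-planar corrections by the genus-suppression argument around \eqref{eqS:meff_def}, which handles $\Pi_h=\mathcal O(N_c^{-2h})$. Importantly, the fine-tuning of $m^2$ is independent of $u$, so the composite map from $u$ to boundary data remains computable.

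\emph{Step 2 (bulk side).} By (A3), in the window $\beta/L_\theta\gg1$ only $g_{\mathrm P}$ and $g_{\rm Sol}$ compete. Their renormalized classical actions are $I_{\mathrm P}=0$ and $I_{\rm Sol}$ as in \eqref{eq:new_Isol_formula}. I would then invoke the dictionary of Sec.~V, namely the Casimir/Brown--York matching and the mixed boundary condition set by $g_{\rm halt}$. This identifies $m_{\rm eff}^2=0$, i.e.\ zero source and vanishing $\langle T_{ij}\rangle$, with $g_{\mathrm P}$. It identifies $m_{\rm eff}^2>0$, i.e.\ a negative Casimir energy, with $g_{\rm Sol}$, whose action is strictly negative. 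This yields \eqref{eq:new_final_DF}.

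\emph{Step 3 (robustness).} Using (A2) together with the one-loop bound \eqref{eqS:one_loop_total}, I would show that the corrections are $O(e^{-m_{\rm eff}L_\theta})+O(L_\theta^{-2})$ at order $G_4^0$. These are parametrically smaller than the classical gap $4\pi^2L^2/(27G_4L_\theta^2)\propto N_c^2$. Likewise, $\alpha'$ corrections of order $\lambda^{-1/2}$ cannot flip the sign, so the dominant saddle is fixed in both branches.

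With the correspondence in hand, $\mathcal A\circ\mathcal C$ computes $\vartheta$, which contradicts the undecidability of halting (Theorem~1). This establishes the claim. Gödel-type independence would follow in the same way, through representability of the reduction.

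The main obstacle is Step~2. The difficulty is not the action computation, which is routine, but justifying that the boundary gap determines \emph{which} filling is realized. The GKPW on-shell comparison alone does not involve $m_{\rm eff}$: $I_{\rm Sol}<0$ regardless. I would therefore lean on the paper's identification of the non-halting branch with vanishing source and of the halting branch with the mixed boundary condition \eqref{eqS:deltaSphi}/$\beta=\kappa\alpha$. I would treat this identification as part of the holographic hypothesis (A2)--(A3) rather than attempt to derive it, and state explicitly that the theorem is conditional on it.
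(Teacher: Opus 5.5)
Your proposal follows essentially the same route as the paper's proof: tune $m^2=-\Sigma_{\rm loop}$ so that $\mathrm{sgn}\,m_{\rm eff}^2(u)=\vartheta(u)$, use (A3) with $I_{\rm P}=0$ and $I_{\rm Sol}<0$ to reach the case split \eqref{eq:new_final_DF}, appeal to (A2) and \eqref{eqS:one_loop_total} for stability, and compose a putative decider with the computable map $\mathcal C$ to contradict Turing's theorem. The obstacle you flag is real, and the paper does not resolve it either: its proof asserts the $\vartheta(u)=0$ branch of \eqref{eq:new_final_DF} rather than deriving it from the $u$-independent on-shell comparison. In that branch no classical action gap favours $g_{\rm P}$ (the paper has $\Delta I=0$ there), so your Step~3 robustness argument only covers the halting branch, and the non-halting branch rests entirely on the identification you propose to make an explicit hypothesis.
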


\begin{proof}
Fix $L$, $G_{4}$, and $\beta$. Under (A3) the only candidate saddles are $g_{\mathrm P}$ and $g_{\mathrm{Sol}}$. Their renormalised actions are given by $I_{\mathrm P}=0$ and $I_{\text{Sol}}$ computed above, hence the difference obeys \eqref{eq:new_final_DF}. Under (A1) and after tuning $m^{2}=-\Sigma_{\text{loop}}$ in the non-halting branch, the effective IR mass satisfies $\mathrm{sgn}\,m_{\mathrm{eff}}^{2}(u)=\vartheta(u)$. If $U$ halts on $u$ then $\vartheta(u)=1$ and $m_{\mathrm{eff}}(u)>0$, hence by \eqref{eqS:one_loop_total} the one-loop correction is exponentially small in $\beta m_{\mathrm{eff}}(u)$ and \eqref{eq:new_final_DF} gives $\Delta I(u)<0$. If $U$ does not halt, then $\vartheta(u)=0$ and one has $m_{\mathrm{eff}}^{2}(u)=0$, so $\Delta I(u)=0$ as $\beta\to\infty$. Therefore deciding the sign of $\Delta I(u)$ is equivalent to deciding whether $U$ halts on $u$ via \eqref{eq:new_final_DF}. Suppose there existed an algorithm $\mathcal A$ taking as input the local interaction table of $H(u)$ and outputting the sign of $\Delta I(u)$. Composing the computable CPW map $\mathcal C$ with $\mathcal A$ would solve the halting problem, contradicting Turing’s theorem \cite{Turing:1937qvq}. Hence no such algorithm exists. Assumption (A2) ensures that higher-derivative corrections in $\alpha'$ and quantum corrections in $g_{s}$ enter $\Delta I(u)$ at strictly subleading orders and, for sufficiently large $\beta$, cannot flip its sign; the conclusion is stable to all perturbative orders.
\end{proof}

The derivation above transplants G\"odelian undecidability from the spectral-gap problem in quantum spins to the classical variational problem of AdS/CFT, even with explicit Einstein equations and boundary conditions fixed, determining which smooth classical fill-in minimises the renormalised gravitational action is algorithmically undecidable once the microscopic couplings encode a universal Turing machine via the CPW mapping.


\end{document}